\definecolor{LightCyan}{rgb}{0.88,1,1}
\newcommand{\A}{\mathbf{A}} 
\newcommand{\ba}{\mathbf{a}} 
\newcommand{\tA}{\tilde{\mathbf{A}}} 
\newcommand{\x}{\mathbf{x}} 
\newcommand{\bb}{\mathbf{b}} 
\newcommand{\B}{\mathbf{B}} 
\newcommand{\tB}{\tilde{\mathbf{B}}} 
\newcommand{\y}{\mathbf{y}}
\newcommand{\C}{\mathbf{C}}
\newcommand{\cc}{\mathbf{c}}
\newcommand{\s}{\mathbf{S}}
\newcommand{\R}{\mathbb{R}}
\newcommand{\e}{\mathbb{E}}
\newcommand{\nn}{\nonumber}
\newtheorem{theorem}{Theorem}[section]
\newtheorem{lemma}[theorem]{Lemma}
\theoremstyle{definition}
\newtheorem{definition}{Definition}[section]
\newtheorem{remark}{Remark}
\begin{document}
%
\title{OverSketch: Approximate Matrix Multiplication for the Cloud}




%
\author{\IEEEauthorblockN{Vipul Gupta$^\star$,
Shusen Wang$^\dag$, Thomas Courtade$^\star$ and Kannan Ramchandran$^\star$\thanks{This work was supported by NSF Grants CCF-1703678, CCF-1704967 and CCF-0939370 (Center for Science of Information)}}
\IEEEauthorblockA{$^\star$Department of EECS, UC Berkeley\\ 
$^\dag$Department of CS, Stevens Institute of Technology\\
Email: \{vipul\_gupta, courtade, kannanr\}@eecs.berkeley.edu, shusen.wang@stevens.edu
}}

\maketitle

\begin{abstract}

We propose OverSketch, an approximate algorithm for distributed matrix multiplication in serverless computing. 
OverSketch leverages ideas from matrix sketching and high-performance computing to enable cost-efficient multiplication that is resilient to faults and straggling nodes pervasive in low-cost serverless architectures. We establish statistical guarantees on the accuracy of OverSketch and empirically validate our results by solving a large-scale linear program using interior-point methods and demonstrate a $34\%$ reduction in compute time on AWS Lambda.
\end{abstract}

\begin{IEEEkeywords}
serverless computing, straggler mitigation, sketched matrix multiplication
\end{IEEEkeywords}

%
\IEEEpeerreviewmaketitle

\section{Introduction}

Matrix multiplication is a frequent computational bottleneck in fields like scientific computing, machine learning, graph processing, etc. In many applications, such matrices are very large, with dimensions easily scaling up to millions. 
 Consequently, the last three decades have witnessed the development of many algorithms for parallel matrix multiplication for High Performance Computing (HPC). During the same period, technological trends like Moore's law made arithmetic operations faster and, as a result, the bottleneck for parallel computation shifted from computation to communication. Today, the cost of moving data between nodes exceeds the cost of arithmetic operations by orders of magnitude, and this gap is increasing exponentially with time \cite{book_future_supercomputing, comm_cost2, demmel}. This has led to the popularity of communication-avoiding algorithms for parallel computation \cite{demmel, 2.5d}.

In the last few years, there has been a paradigm shift from HPC towards distributed computing on the cloud due to extensive and inexpensive commercial offerings. In spite of developments in recent years, server-based cloud computing is inaccessible to a large number of users due to complex cluster management and a myriad of configuration tools. Serverless computing\footnote{The  term `serverless' is a misnomer, servers are still used for computation but their maintenance and provisioning is hidden from the user.} has recently begun to fill this void by abstracting away the need for maintaining servers and thus removing the need for complicated cluster management while providing greater elasticity and easy scalability \cite{serverless_computing, pywren, numpywren}. Some examples are Amazon Web Services (AWS) based Lambda, Microsoft Azure functions, and Google Cloud Functions.
Large-scale matrix multiplication, being embarrassingly parallel and frequently encountered, is a natural fit for serverless computing. 

\begin{figure}[t]
        \centering
        \includegraphics[scale=0.33]{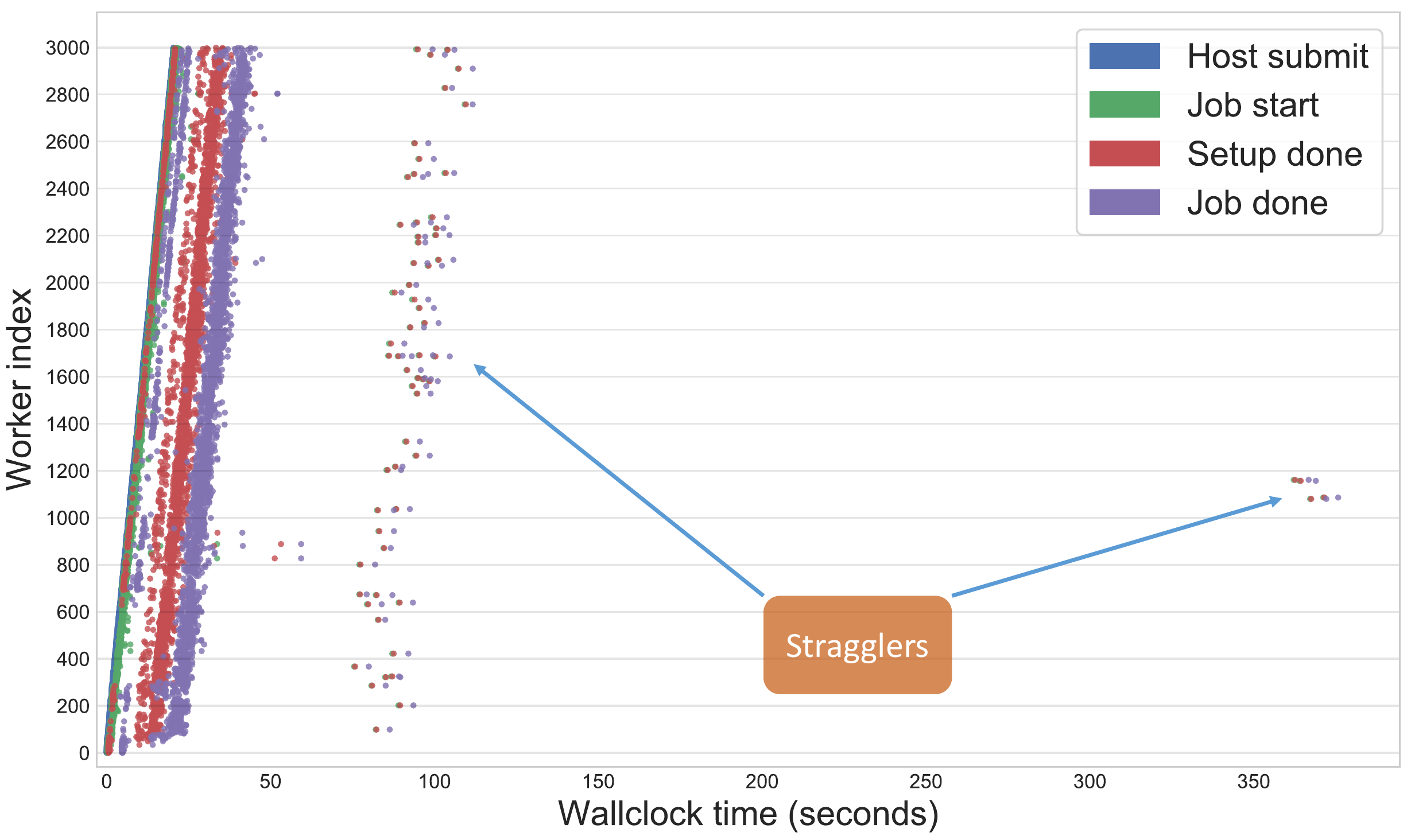}
    \caption{\small Job times for 3000 AWS Lambda nodes where the median job time is around 40 seconds, and around 5\% of the nodes take 100 seconds, and two nodes take as much as 375 seconds to complete the same job.}
    \label{fig:stragglers}
\vspace{-3mm}
\end{figure}

Existing distributed algorithms for HPC/server-based systems cannot, in general, be extended to serverless computing due to the following crucial differences between the two architectures:
\begin{itemize}[leftmargin=*]
\setlength{\itemindent}{0em}
\item Workers in the serverless setting, unlike cluster nodes, do not communicate amongst themselves. They read/write data directly from/to a single data storage entity (for example, cloud storage like AWS S3) and the user is only allowed to submit prespecified jobs and does not have any control over the management of workers \cite{serverless_computing, pywren, numpywren}.
\item Distributed computation in HPC/server-based systems is generally limited by the number of workers at disposal. However, in serverless systems, the number of inexpensive workers can easily be scaled into the thousands, but these low-commodity nodes are generally limited by the amount of memory and lifespan available.
\item Unlike HPC,  nodes in the cloud-based systems suffer degradation due to system noise which can be a result of limited availability of shared resources, network latency, hardware failure, etc. \cite{tailatscale, hoefler}. This causes variability in job times, which results in subsets of slower nodes, often called \emph{stragglers}, which significantly slow the computation. Time statistics for worker job times are plotted in Figure \ref{fig:stragglers} for AWS Lambda. Notably, there are a few workers ($\sim5\%$) that take much longer than the median job time, thus decreasing the overall computational efficiency of the system. Distributed algorithms robust to such unreliable nodes are desirable in cloud computing. 
\end{itemize}


\subsection{Main Contributions}

This paper bridges the gap between communication-efficient algorithms for distributed computation and existing methods for straggler-resiliency. To this end, we first analyze the monetary cost of distributed matrix multiplication for serverless computing for two different schemes of partitioning and distributing the data. Specifically, we show that row-column partitioning of input matrices requires asymptotically more communication than blocked partitioning for distributed matrix multiplication, similar to the optimal communication-avoiding algorithms in the HPC literature. 

In applications like machine learning, where the data itself is noisy, solution accuracy is often traded for computational efficiency. Motivated by this, we propose OverSketch, a sketching scheme to perform blocked \emph{approximate} matrix multiplication and prove statistical guarantees on the accuracy of the result. 
OverSketch has threefold advantages:
\begin{enumerate}[leftmargin=*]
\item Reduced computational complexity by significantly decreasing the dimension of input matrices using sketching,
\item Resiliency against stragglers and faults in serverless computing by over-provisioning the sketch dimension,
\item Communication efficiency for distributed multiplication due to the blocked partition of input matrices. 
\end{enumerate}
Sketching for OverSketch requires linear time that is embarrassingly parallel. 
Through experiments on AWS Lambda, we show that small redundancy ($\approx 5\%$) is enough to tackle stragglers using OverSketch. Furthermore, we use OverSketch to calculate the Hessian distributedly while solving a large linear program using interior point methods and demonstrate a $34\%$ reduction in total compute time on AWS Lambda. 


\subsection{Related Work}
\label{related_work}

Traditionally, techniques like speculative execution are used to deal with stragglers, for example, Hadoop MapReduce \cite{mapreduce} and Apache Spark \cite{spark}. Such techniques work by detecting nodes that are running slowly or will slow down in the future and then assigning their jobs to new nodes without shutting down the original job. The node that finishes first submits its results. This has many limitations. A constant monitoring of jobs is required, which might be costly if there are many workers in the system. It is also possible that a node will straggle only towards the end of the job, and by the time the job is resubmitted, the additional time and computational overhead has already hurt the overall efficiency of the system. The situation is even worse for smaller jobs, as spinning up an extra node requires additional invocation and setup time which can exceed the job time itself.

Recently, approaches based on coding theory have been developed which cleverly introduce redundancy into the computation to deal with stragglers \cite{kangwook1,kangwook2,tavor,poly_codes,matdot,jingge}. Many of these proposed schemes have been dedicated to distributed matrix multiplication \cite{kangwook2, poly_codes, tavor, matdot}. In \cite{kangwook2}, the authors develop a coding scheme for matrix multiplication that uses Maximum Distance Separable (MDS) codes to code $\A$ in a column-wise fashion and $\B$ in a row-wise fashion, so that the resultant is a product-code of $\C$, where $\C = \A\times \B$. An illustration is shown in Figure \ref{fig:kangwook2}. A simpler version of this has been known in the HPC community as Algorithm-Based-Fault-Tolerance (ABFT) \cite{abft}. 
 Authors  in \cite{tavor} generalize the results in \cite{kangwook2} to a $d$-dimensional product code with only one parity in each dimension. In \cite{poly_codes}, the authors develop polynomial codes for matrix multiplication, which is an improvement over \cite{kangwook2} in terms of recovery threshold, that is, the minimum number of workers required to recover the product $\C$. 
 
The commonality in these and other similar results is that they divide the input matrices into row and column blocks, where each worker multiplies a row block (or some combination of row blocks) of $\A$ and a column block (or some combination of column blocks) of $\B$. These methods provide straggler resiliency but are not cost-efficient as they require asymptotically more communication than blocked partitioning of data, as discussed in detail in the next section. Another disadvantage of such coding-based methods is that there are separate encoding and decoding phases that require additional communication and potentially large computational burden at the master node, which may make the algorithm infeasible in some distributed computing environments. 


\begin{figure}[t]
\includegraphics[width=0.48\textwidth]{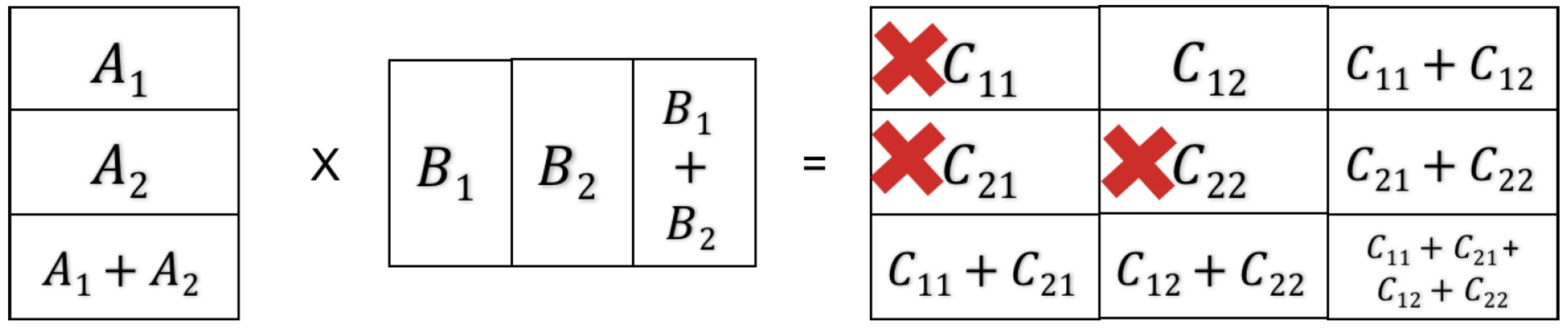}
\centering
\caption{ \small
Matrix $\A$ is divided into 2 row chunks $\A_1$ and $\A_2$, while $\B$ is divided into two column chunks $\B_1$ and $\B_2$. 
During the encoding process, redundant chunks $\A_1+\A_2$ and $\B_1+\B_2$ are created. To compute $\C$, 9 workers store each possible combination of a chunk of $\A$ and $\B$ and multiply them. During the decoding phase, the master can recover the affected data ($C_{11}, C_{12}$ and $C_{22}$ in this case) using the redundant chunks.} 
\label{fig:kangwook2}
\vspace{-3mm}
\end{figure}

\section{Preliminaries}
\label{sec:motivation}

\begin{figure}[t]
    \centering
    \begin{subfigure}[t]{0.5\textwidth}
        \centering
        \includegraphics[scale=0.45]{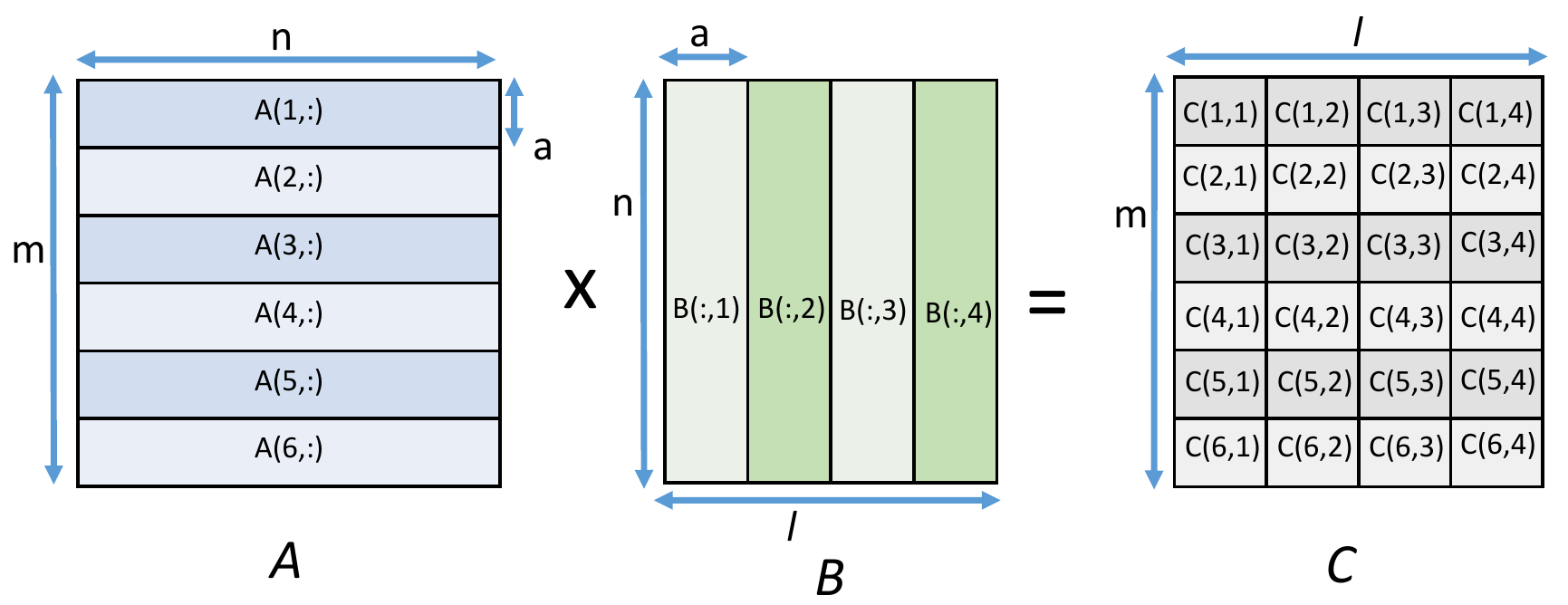}
        \caption{\footnotesize Distributed naive matrix multiplication, where each worker multiplies a row-block of $\A$ of size $a\times n$ and a column block of $\B$ of size $n\times a$ to get an $a\times a$ block of $\C$.}
        \label{fig:gemm_naive}
    \end{subfigure}%
    \hspace{0.02\textwidth}
    \begin{subfigure}[t]{0.5\textwidth}
        \centering
        \includegraphics[scale=0.525]{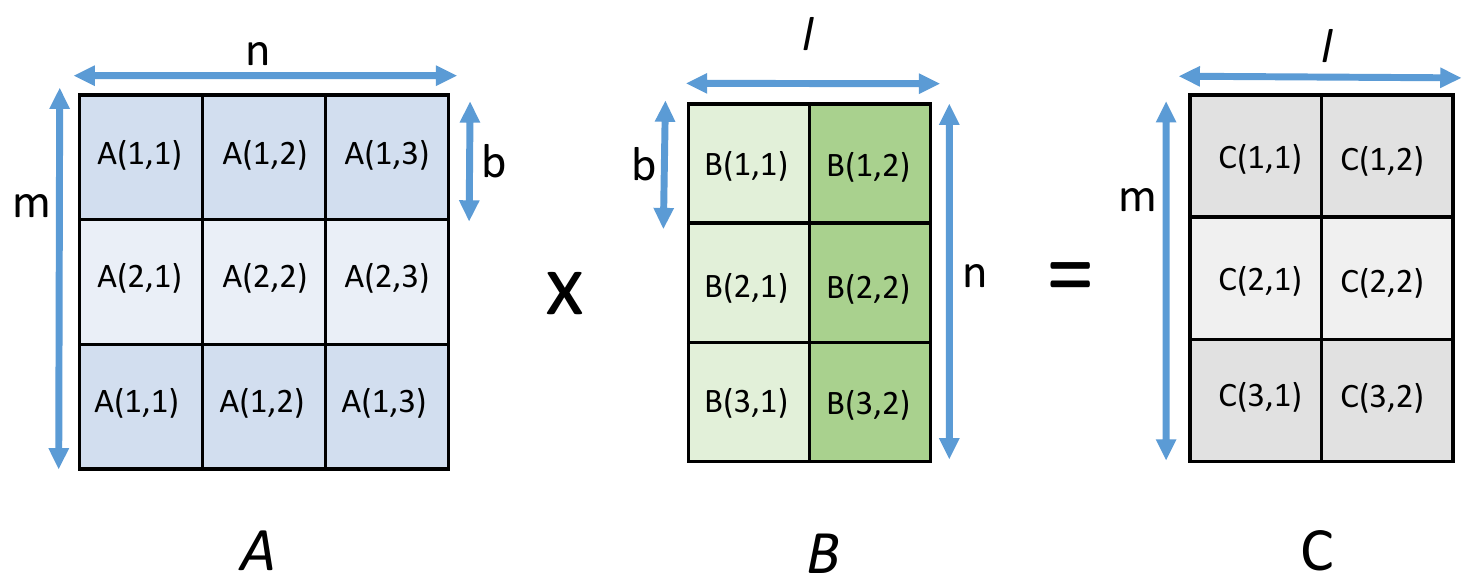}
        \caption{\footnotesize Distributed blocked matrix multiplication, where each worker multiplies a sub-block of $\A$ of size $b\times b$ and a sub-block of $\B$ of size $b\times b$.}
        \label{fig:gemm_blocked}
    \end{subfigure}
    \caption{\small An illustration of two algorithms for distributed matrix multiplication.}
    \label{fig:gemm}
\vspace{-3mm}
\end{figure}

 There are two common schemes for distributed multiplication of two matrices $\A\in \R^{m\times n}$ and $\B\in \R^{n\times l}$, as illustrated in Figures \ref{fig:gemm_naive} and \ref{fig:gemm_blocked}. We refer to these schemes as \emph{naive} and \emph{blocked} matrix multiplication, respectively. Detailed steps for these schemes are provided in Algorithms \ref{algo2} and \ref{algo3}, respectively, for the serverless setting. 
During naive matrix multiplication, each worker receives and multiplies an $a\times n$ row-block of $\A$ and $n\times a$ column-block of $\B$ to compute an $a\times a$ block of $\C$. 
Blocked matrix multiplication consists of two phases. During the computation phase, each worker gets two $b\times b$ blocks, one each from $\A$ and $\B$, which are then multiplied by the workers. In the reduction phase, to compute a $b \times b$ block of $\C$, one worker gathers results of all the $n/b$ workers from the cloud storage corresponding to one row-block of $\A$ and one column-block of $\B$ and adds them. For example, in Figure \ref{fig:gemm_blocked}, to get $\C(1,1)$, results from 3 workers who compute $\A(1,1)\times \B(1,1)$, $\A(1,2)\times \B(2,1)$ and $\A(1,3)\times \B(3,1)$ are added. 

It is accepted in High Performance Computing (HPC) that blocked partitioning of input matrices takes less time than naive matrix multiplication \cite{summa, 2.5d, demmel}. For example, in \cite{2.5d}, the authors propose 2.5D matrix multiplication, an optimal communication avoiding algorithm for matrix multiplication in HPC/server-based computing, that divides input matrices into blocks and stores redundant copies of them across processors to reduce bandwidth and latency costs. However, perhaps due to lack of a proper analysis for cloud-based distributed computing, existing algorithms for straggler mitigation in the cloud do naive matrix multiplication \cite{kangwook2,poly_codes,tavor}. Next, we bridge the gap between cost analysis and straggler mitigation for distributed computation in the serverless setting.

\begin{algorithm}[t]
\SetAlgoLined
\footnotesize
\SetKwInOut{Input}{Input}
\Input{Matrices $\A\in \R^{m\times n}$ and $\B\in \R^{n\times l}$}
\KwResult{$\C = \A\times \B$ }
 \textbf{Initialization}: Divide $\A$ into submatrices of size $a\times n$ (row-wise division) and $\B$ into submatrices of size $n \times a$ (column-wise division) \\
 \For{i=1 to m/a}{
  \For{j=1 to l/a}{
1. Worker $W_{ij}$ receives $i$-th chunk of $\A$, say $\A(i,:)$, and $j$-th chunk of $\B$, say $\B(:,j)$\\
2. $W_{ij}$ computes the $a\times a$ chunk of $\C$, that is, $\C(i,j) = \A(i,:) \times \B(:,j)$\\
3. $W_{ij}$ writes $\C(i,j)$ back to the cloud storage
 }}
 \caption{Distributed naive matrix multiplication}
 \label{algo2}
\end{algorithm}

\begin{algorithm}[t]
\SetAlgoLined
\footnotesize
\SetKwInOut{Input}{Input}
\Input{Matrices $\A\in \R^{m\times n}$ and $\B\in \R^{n\times l}$}
\KwResult{$\C = \A\times \B$ }
 \textbf{Initialization}: Divide $\A$ into $m/b\times n/b$ matrix and $\B$ into $n/b \times l/b$ matrix of $b\times b$ blocks where $b$ is the block-size\\
 \Comment*[h]{\color{blue}// Computation phase:}\\
 \For{i = 1 to m/b}{
  \For{j = 1 to l/b}{
  \For{k = 1 to n/b}{
1. Worker $W_{ijk}$ gets $(i,k)$-th block of $\A$, say $\A(i,k)$, and $(k,j)$-th block of $\B$, say $\B(k,j)$, from the cloud storage\\
2. $W_{ijk}$ then computes the $b\times b$ product $\hat{\C}_{ijk} = \A(i,k) \times \B(k,j)$\\
3. Worker writes the result $W_{ijk}$ back to the cloud storage
 }}}
\Comment*[h]{\color{blue}// Reduction phase:}\\
 \For{i = 1 to m/b}{
  \For{j = 1 to l/b}{
   Spin a new worker, say $W_{ij}$, that stores an all-zero $b \times b$ sub-block $C_{ij}$ \\
  \For{k = 1 to n/b}{
  1. $W_{ij}$ extracts the output $\hat{\C}_{ijk}$ written by $W_{ijk}$ from cloud storage\\
  2. $W_{ij}$ does $\C_{ij} = \C_{ij} + \hat{\C}_{ijk}$
 }
 $W_{ij}$ writes $\C_{ij}$ back to the cloud storage}}
 \caption{Distributed blocked matrix multiplication}
 \label{algo3}
\end{algorithm}

\section{Cost Analysis: Naive and Blocked multiplication}

There are communication and computation costs associated with any distributed algorithm. Communication costs themselves are of two types: latency and bandwidth. For example, sending $n$ bits requires packing them into contiguous memory and transmitting them as a message.
The latency cost $\alpha$ is the fixed overhead time spent in packing and transmitting a message over the network. Thus, to send $Q$ messages, the total latency cost is $\alpha Q$. 
Similarly, to transmit $K$ bits, a bandwidth cost proportional to $K$, given by $\beta K$, is associated.
Letting $\gamma$ denote the time to perform one floating point operation (FLOP),  the total computing cost is $\gamma F$, where $F$ is the total number of FLOPs at the node.  Hence, the total time pertaining to one node that sends $M$ messages, $K$ bits and performs $F$ FLOPs is 
$$T_{\text{worker}} = \alpha Q + \beta K + \gamma F,$$ 
where $\alpha \gg \beta \gg \gamma$. The $(\alpha,\beta, \gamma)$ model defined above has been well-studied and is used extensively in the HPC literature \cite{comm_cost2, demmel, demmel2, 2.5d, devarakonda}. It is ideally suited for serverless computing, where network topology does not affect the latency costs as each worker reads/writes directly from/to the cloud storage and no multicast gains are possible. 

However, our analysis for costs incurred during distributed matrix multiplication differs from previous works in three principle ways. 1) Workers in serverless architecture cannot communicate amongst themselves, and hence, our algorithm for blocked multiplication is very different from optimal communication avoiding algorithm for HPC that involves message passing between workers \cite{2.5d}. 2) The number of workers in HPC analyses is generally fixed, whereas the number of workers in the serverless setting is quite flexible, easily scaling into the thousands, and the limiting factor is memory/bandwidth available at each node. 3) Computation on the inexpensive cloud is more motivated by savings in expenditure than the time required to run the algorithm. 
We define our cost function below.

If there are $W$ workers, each doing an equal amount of work, the total amount of money spent in running the distributed algorithm on the cloud is proportional to 
\begin{equation}\label{Ctotal}
C_{\text{total}} = W\times T_{\text{worker}} = W(\alpha Q + \beta K + \gamma F).
\end{equation} 
Eq. \eqref{Ctotal} does not take into account the straggling costs as they increase the total cost by a constant factor (by re-running the jobs that are straggling) and hence does not affect our asymptotic analysis.

Inexpensive nodes in serverless computing are generally constrained by the amount of memory or communication bandwidth available. For example, AWS Lambda nodes have a maximum allocated memory of 3008 MB\footnote{AWS Lambda limits are available at (may change over time) https://docs.aws.amazon.com/lambda/latest/dg/limits.html},  a fraction of the memory available in today's smartphones. Let the memory available at each node be limited to $M$. That is, the communication bandwidth available at each worker is limited to $M$, and this is the main bottleneck of the distributed system. We would like to multiply two large matrices $\A\in \R^{m\times n}$ and $\B\in \R^{n\times l}$ in parallel, and let $M = O(n^\delta)$. 
Note that if $\delta \geq 2$, one of the following will happen: 
\begin{itemize}[leftmargin=*]
\item $m=O(n)$ and $l = O(n)$, and the input matrices can fit into one worker's memory and parallelism is not required.
\item Either $m=\omega(n)$ or $l=\omega(n)$ or both, and block-size for blocked matrix multiplication is $n$. The two schemes, naive and blocked multiplication, would exactly be the same in this case.
\end{itemize}
Thus, for all practical cases in consideration, $\delta < 2$.

\begin{theorem}\label{thm:cost}
For the cost model defined in Eq. \eqref{Ctotal}, communication (i.e., latency and bandwidth) costs for blocked multiplication outperform naive multiplication by a factor of $O(n^{1-\delta/2})$, where the individual costs are listed in Table \ref{table1}.
\end{theorem}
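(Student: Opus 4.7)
The plan is to compute the total latency, bandwidth, and FLOP costs for naive and blocked multiplication separately as functions of $m, n, l, \delta$, then take ratios. The per-worker memory bound $M = O(n^\delta)$ pins down the largest admissible block dimension in each scheme. For naive multiplication each worker must hold an $a \times n$ row-slice of $\A$ and an $n \times a$ column-slice of $\B$, so $an = O(n^\delta)$ forces $a = \Theta(n^{\delta-1})$. For blocked multiplication each worker stores two $b \times b$ blocks, so $b^2 = O(n^\delta)$ forces $b = \Theta(n^{\delta/2})$.

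For naive (Algorithm \ref{algo2}) I would count $W_{\text{naive}} = ml/a^2$ workers, each issuing $O(1)$ messages, reading $O(an)$ words, and performing $O(a^2 n)$ FLOPs; substituting $a = \Theta(n^{\delta-1})$ into \eqref{Ctotal} gives total costs $\Theta(ml\,n^{2-2\delta}\alpha)$ in latency, $\Theta(ml\,n^{2-\delta}\beta)$ in bandwidth, and $\Theta(mln\,\gamma)$ in computation. For blocked (Algorithm \ref{algo3}) the two phases need separate bookkeeping. The computation phase uses $mnl/b^3$ workers, each issuing $O(1)$ messages, reading $O(b^2)$ words, and doing $O(b^3)$ FLOPs. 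The reduction phase uses $ml/b^2$ aggregators, each issuing $O(n/b)$ messages, reading $O(nb)$ words, and doing $O(nb)$ additions. Summing and substituting $b = \Theta(n^{\delta/2})$ gives totals $\Theta(ml\,n^{1-3\delta/2}\alpha)$, $\Theta(ml\,n^{1-\delta/2}\beta)$, and $\Theta(mln\,\gamma)$, with the reduction phase contributing the same order in communication and a strictly lower order in FLOPs.

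The ratio of naive-over-blocked costs then equals $n^{1-\delta/2}$ for both latency and bandwidth, while the FLOP costs coincide; since $\delta < 2$ by the non-trivial parallelism assumption established just before the theorem, the exponent is strictly positive and the entries of Table \ref{table1} are precisely the expressions derived above. The main technical nuisance I expect is the reduction phase of blocked multiplication: one must verify that aggregating $n/b$ partial sums per output block does not inflate the latency or bandwidth beyond the computation phase, i.e., that the aggregator cost still keeps the $n^{1-\delta/2}$ advantage intact, which is exactly where $\delta < 2$ is invoked. A secondary check is that the chosen $a$ and $b$ are optimal (not merely admissible): since the memory constraint is the binding bottleneck and the per-worker communication is monotone in the block dimension, saturating the memory is optimal, so the $\Theta$ bounds are tight and the asymptotic comparison is well-defined.
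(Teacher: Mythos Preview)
Your proposal is correct and follows essentially the same route as the paper: count workers, per-worker messages/words/FLOPs for each scheme (including both phases of Algorithm~\ref{algo3}), substitute the memory-saturating block sizes $a=\Theta(n^{\delta-1})$ and $b=\Theta(n^{\delta/2})$, and read off the ratios. One small quibble: the reduction-phase communication in fact matches the computation-phase communication \emph{exactly} in order (both give $mln/b$ words and $mln/b^3$ messages), so $\delta<2$ is not needed to control the aggregation step itself---it is only needed at the very end to make the ratio exponent $1-\delta/2$ strictly positive.
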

\begin{proof}
See Appendix \ref{appendix:cost_prop}.
\end{proof}

\begin{table}
  \caption{Costs comparison for naive and blocked matrix multiplication in the serverless setting, where $\delta <2$.}
  \label{table1}
  \centering
   \resizebox{0.9\columnwidth}{!}{
  \begin{tabular}{c|c|c|>{\columncolor[gray]{0.9}}c}
    \toprule
    \rowcolor{LightCyan}
     Cost type     & Naive multiply & Blocked Multiply & Ratio: naive/blocked \\
   \midrule
 Latency &  $O(mln^{2(1-\delta)})$ & $O(mln^{1-3\delta/2})$  & $O(n^{1-\delta/2})$     \\
 \midrule
Bandwidth &  $O(mln^{2-\delta})$ & $O(mln^{1-\delta/2})$  & $O(n^{1-\delta/2})$\\
\midrule
Computation &  $O(mln)$ & $O(mln)$  & $1$\\
    \bottomrule
  \end{tabular}
  }
\end{table}

The rightmost column in Table \ref{table1} lists the ratio of communication costs for naive and blocked matrix multiplication. We note that the latter significantly outperforms the former, with communication costs being asymptotically worse for naive multiplication.  An intuition behind why this happens is that each worker in distributed blocked multiplication does more work than in distributed naive multiplication for the same amount of received data. For example, to multiply two square matrices of dimension $n$, where memory at each worker limited by $M=2n$, $a=1$ for naive multiplication and $b = \sqrt{n}$ for blocked multiplication. We note that the amount of work done by each worker in naive and blocked multiplication is $O(n)$ and $O(n^{3/2})$, respectively. Since the total amount of work is constant and equal to $O(n^3)$, blocked matrix multiplication ends up communicating less during the overall execution of the algorithm as it requires fewer workers. 
Note that naive multiplication takes less time to complete as each worker does asymptotically less work, however, the number of workers required is asymptotically more, which is not an efficient utilization of resources and increases the expenditure significantly.

 \begin{remark}
For clarity of exposition, we partition the input matrices into square-blocks of dimension $b$. However, optimal block dimensions for rectangular-block partitions can be found by minimizing \eqref{Ctotal} while incorporating the memory constraints of the distributed system. This is a convex problem that can be converted to a geometric program \cite{gp_boyd}. We note that the optimal block dimension found in this way is nearly square.
 \end{remark}


\begin{figure}[t]
\centering
  \includegraphics[scale=0.55]{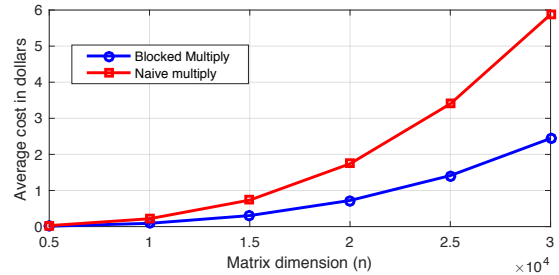}
  \caption{\small Comparison of AWS Lambda costs for multiplying two $n\times n$ matrices, where each worker is limited by 3008 MB of memory and price per running worker per 100 milliseconds is $\$0.000004897$.}
  \label{fig:lambda_costs}
  \vspace{-4mm}
\end{figure}

Figure \ref{fig:lambda_costs} supports the above analysis where we plot the cost in dollars of multiplying two square matrices in AWS Lambda, where each node's memory is limited by 3008 MB and price per worker per $100$ millisecond is \$$0.000004897$. However, as discussed earlier, existing schemes for straggler-resiliency in distributed matrix multiplication consider naive multiplication which is impractical from a user's point of view. 
In the next section, we propose OverSketch, a scheme to mitigate the detrimental effects of stragglers for blocked matrix multiplication.

\section{OverSketch: Straggler-resilient Blocked Matrix Multiplication using Sketching}

Many of the recent advances in algorithms for numerical linear algebra have come from the technique of linear sketching, in which a given matrix is compressed by multiplying it with a random matrix of appropriate dimension. The resulting product can then act as a proxy for the original matrix in expensive computations such as matrix multiplication, least-squares regression, low-rank approximation, etc. \cite{mahoney, woodruff_now, mert, yun_mert}. For example, computing the product of $\A\in \R^{m\times n}$ and $\B \in \R^{n\times l}$ takes $O(mnl)$ time. However, if we use $\s\in \R^{n\times d}$ to compute the sketches, say  $\tilde{\A} = \A\s \in \R^{m\times d}$ and $\tilde{\B} = \s^T\B \in \R^{d\times l}$, where $d \ll n$ is the sketch dimension, we can reduce the computation time to $O(mdl)$ by computing an approximate product $\A\s\s^T\B$. This is very useful in applications like machine learning, where the data itself is noisy, and computing the exact result is not needed. 

\textbf{Key idea behind OverSketch}:  Sketching accelerates computation by eliminating redundancy in the matrix structure through dimension reduction. However, the coding-based approaches described in Section \ref{related_work} have shown that redundancy can be  \emph{good} for combating stragglers if judiciously introduced into the computation. With these competing points of view in mind, our algorithm OverSketch works by "oversketching" the matrices to be multiplied by reducing dimensionality not to the minimum required for sketching accuracy, but rather to a slightly higher amount which simultaneously ensures both the accuracy guarantees and speedups of sketching \emph{and} the straggler resilience afforded by the redundancy which was not eliminated in the sketch.  OverSketch further reduces asymptotic costs by adopting the idea of block partitioning from HPC, suitably adapted for a serverless architecture. 

Next, we propose a sketching scheme for OverSketch and describe the process of straggler mitigation in detail.

\subsection{OverSketch: The Algorithm}

\begin{figure*}
\centering
\includegraphics[scale=0.47]{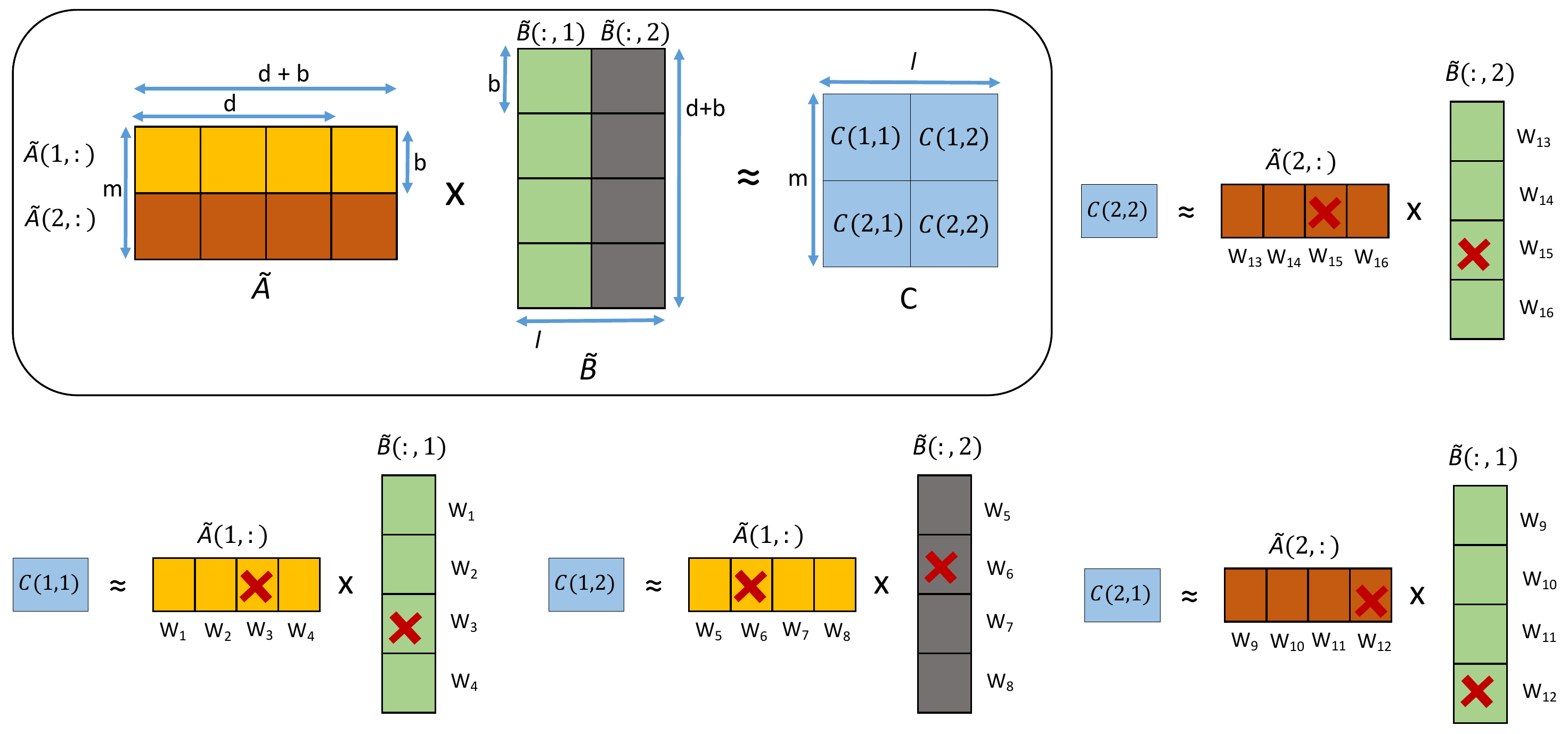}
\caption{\small An illustration of multiplication of $m\times z$ matrix $\tA$ and $z \times l$ matrix $\tB$, where $z = d+b$ assures resiliency against one straggler per block of $\C$, and $d$ is chosen by the user to guarantee a desired accuracy.  Here, $m = l = 2b$, $d = 3b$, where $b$ is the block-size for blocked matrix multiplication. This scheme ensures one worker can be ignored while calculating each block of $\C$.} 
\label{fig:bMM}
\end{figure*}

\begin{figure*}
\centering
\includegraphics[scale=0.55]{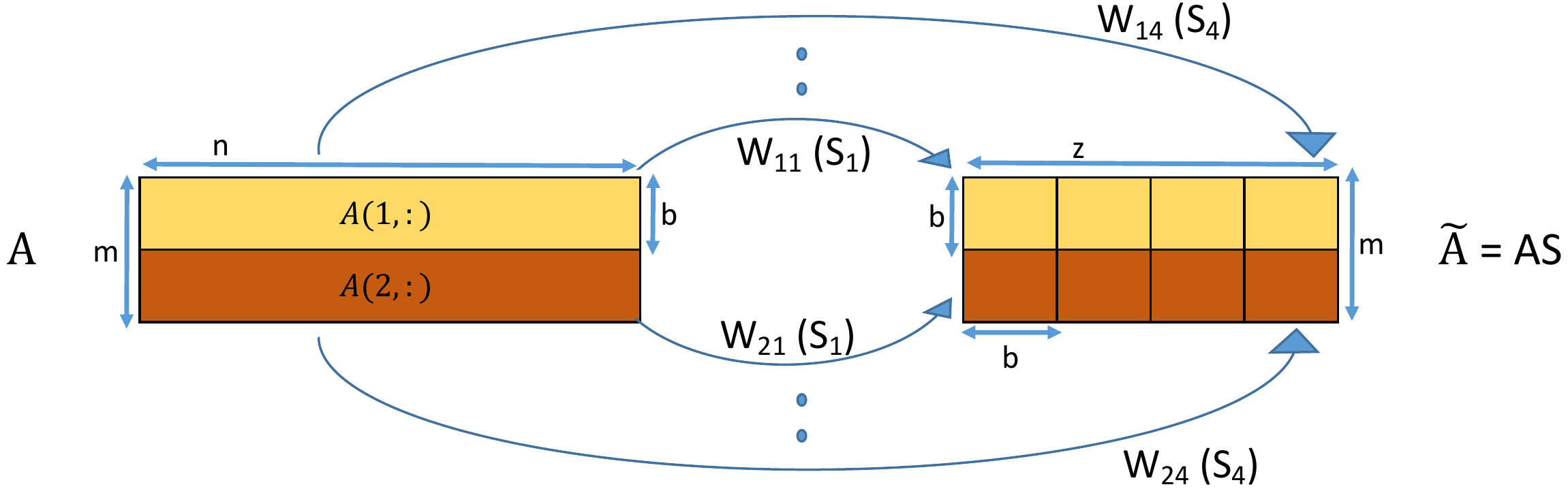}
\caption{\small An illustration of sketching $\A \in \R^{m\times n}$ in parallel using the sketch matrix in Eq. \eqref{sketch_matrix} with sketch dimension $z = (N+e)b$.  Worker $W_{ij}$ receives the row-block $\A(i,:)$ of $\A$ and the Count-sketch $\s_j$ to compute the $(i,j)$-th block of $\tA$. Sketching requires a total of $mz/b^2$ workers. Here, $z = 4b, N=3$ and $e=1$, and $\A$ is divided into 2 row-blocks, that is, $m = 2b$. Total number of workers required for distributed sketching is 8.}
\label{fig:oversketching}
\vspace{-2mm}
\end{figure*}

During blocked matrix multiplication, the $(i,j)$-th block of $\C$ is computed by assimilating results from $d/b$ workers who compute the product $\tA(i,k)\times \tB(k,j)$, for $k = 1,\cdots,d/b$. Thus, the computation $\C(i,j)$ can be viewed as the product of the row sub-block $\tilde{\A}(i,:)\in \R^{b\times d}$ of $\tilde{\A}$ and the column sub-block $\tilde{\B}(:,j) \in \R^{d\times b}$ of $\tilde{\B}$. An illustration is shown in Figure \ref{fig:bMM}. 
Assuming $d$ is large enough to guarantee the required accuracy in $\C$, we increase the sketch dimension from $d$ to $z = d + eb$, where $e$ is the worst case number of stragglers in $N = d/b$ workers. For the example in Figure \ref{fig:bMM}, $e=1$. To get a better insight on $e$, we observe in our simulations for cloud systems like AWS lambda and EC2 that the number of stragglers is $< 5\%$ for most runs. Thus, if $N = d/b = 40$, i.e. $40$ workers compute one block of $\C$, then $e \approx 2$ is sufficient to get similar accuracy for matrix multiplication. We describe OverSketch in detail in Algorithm \ref{algo:OverSketch}. Next, we describe how to compute the sketched matrices $\tA$ and $\tB$.

\begin{algorithm}[t]
\SetAlgoLined
\small
\SetKwInOut{Input}{Input}
\Input{Matrices $\A\in \R^{m\times n}$ and $\B\in \R^{n\times l}$, sketch dimension $z$, straggler tolerance $e$}
\KwResult{$\C \approx \A\times \B$ }
\textbf{Sketching}: Use Algorithm \ref{algo:oversketching} to obtain $\tA = \A\s$ and $\tB = \s^T\B$ distributedly\\
\textbf{Block partitioning}: Divide $\tA$ into $m/b\times z/b$ matrix and $\B$ into $z/b \times l/b$ matrix of $b\times b$ blocks where $b$ is the block-size\\
\textbf{Computation phase}: Use the computation phase from Algorithm \ref{algo3} to multiply $\tA$ and $\tB$. This step invokes $mlz/b^3$ workers, where $z/b$ workers are used per block of $\C$\\
\textbf{Termination}: Stop computation when any $d/b$ workers return their results for each of the $ml/b^2$ blocks of $\C$, where $d = z - eb$\\
\textbf{Reduction phase}: Invoke $ml/b^2$ workers for reduction as described in Algorithm \ref{algo3} on available results
 \caption{OverSketch: Distributed blocked matrix multiplication for the Cloud}
 \label{algo:OverSketch}
\end{algorithm}

Many sketching techniques have been proposed recently for approximate matrix computations. For example, to sketch a $m\times n$ matrix $\A$ with sketch dimension $d$, Gaussian projection takes $O(mnd)$ time, Subsampled Randomized Hadamard Transform (SRHT) takes $O(mn\log n)$ time, and Count-sketch takes $O(nnz(A))$ time, where $nnz(\cdot)$ is the number of non-zero entries \cite{clarkson_woodruff_stoc, woodruff_now, shusen, meng_mahoney}. 

\begin{algorithm}[t]
\SetAlgoLined
\small
\SetKwInOut{Input}{Input}
\Input{Matrix $\A\in \R^{m\times n}$ and sketch dimension $b$}
\KwResult{$\tA\in \R^{m\times b}$, a random Count-sketch of $\A$}
Multiply each column of $\A$ by $-1$ with probability 0.5\\
Map each column of resultant $\A$ to an integer in $[1,b]$ uniformly randomly\\
Add columns in $\A$ that are mapped to the same integer. The resultant matrix is a Count-sketch of $\A$ 
 \caption{Calculating Count-sketch of a matrix $\A$}
 \label{algo:count_sketch}
\end{algorithm}

Count sketch, first exploited in data streaming literature, has been widely applied to expedite large-scale matrix computations \cite{clarkson_woodruff_stoc, meng_mahoney, pagh}. It is one of the most popular sketching techniques as it requires linear time to compute the matrix sketch with similar approximation guarantees for matrix multiplication. 
To compute the Count-sketch of $\A\in \R^{m\times n}$ of sketch dimension $b$, each column in $\A$ is multiplied by $-1$ with probability 0.5 and then mapped to an integer sampled uniformly from $\{1,2,\cdots,b\}$. Then, to compute the sketch $\tA_c = \A\s_c$, columns with the same mapped value are summed (see Algorithm \ref{algo:count_sketch} for details). 
An example of Count-sketch matrix with $n=9$ and $b=3$ is
\begin{equation}\label{count_sketch_example}
\s_c^T = 
\begin{bmatrix}
    0  & 0 & 0 & 1 & -1 & 0 & -1  & 0 & 0 \\
    1  & -1 & 0 & 0 & 0 & 1 & 0  & 0 & 0\\
    0  & 0 & 1 & 0 & 0 & 0 & 0  & -1& -1
    \end{bmatrix}.
\end{equation}
Here, $\A$ has 9 columns, and columns $4,5$ and $7$ were mapped to $1$, columns $1,2$ and $6$ were mapped to 2, and columns $3,8$ and $9$ were mapped to $3$. Thus, the Count-sketch $\tA_c$ would have only 3 columns, which are obtained by summing the columns of $\A$ with the same mapped value (after possibly multiplying with -1). 
The sparse structure of $\s_c$ ensures that the computation of sketch takes $O(nnz(A))$ time.
However, a drawback of the desirable sparse structure of Count-sketch is that it cannot be directly employed for straggler mitigation in blocked matrix multiplication as it would imply complete loss of information from a subset of columns of $\A$. For the example in \eqref{count_sketch_example}, suppose the worker processing column 3 of $\tA_c$ be straggling. Ignoring this worker would imply that columns $3, 8$ and $9$ of $\A$ were not considered in the computation. This will generally lead to poor accuracy for sketched matrix multiplication.

To facilitate straggler mitigation for blocked matrix multiplication, we propose a new sketch matrix $\s$, inspired by Count-sketch, and define it as  
\begin{equation}\label{sketch_matrix}
\s = \frac{1}{\sqrt{N}}(\s_1, \s_2,\cdots,\s_{N+e}),
\end{equation}
where $N = d/b$, $e$ is the expected number of stragglers per block of $\C$ and $\s_i \in\R^{n\times b}$, for $i=1,2,\cdots,(N+e)$, is a Count-sketch matrix  with dimension $b$. Thus, the total sketch-dimension for the sketch matrix in \eqref{sketch_matrix} is $z = (N+e)b = d + eb$.
Computation of this sketch takes $O(nnz(\A)(N+e))$ time in total and can be implemented in a distributed fashion trivially, where $(N+e)$ is the number of workers per block of $\C$. An illustration of distributed sketching of $\A$ in serverless systems is described in Figure \ref{fig:oversketching}. For detailed steps, see Algorithm \ref{algo:oversketching}. A few remarks regarding OverSketch based distributed matrix multiplication are in order.

\begin{itemize}[leftmargin=*]
\setlength{\itemindent}{0em}
\item 
\textbf{Graceful degradation}: Coding-based straggler mitigation (see Figure \ref{fig:kangwook2} for example) cannot tolerate more stragglers than provisioned. An advantage of using sketching schemes for computation is that more stragglers can be tolerated than initially provisioned at the cost of accuracy of the result, thus exhibiting `graceful degradation'.

\item 
\textbf{Memory constrained distributed sketching}: It is assumed in Algorithm \ref{algo:oversketching} that each worker can store an entire row-block of $\A$ in memory to calculate its Count-sketch. That might not always be the case, especially in low-memory serverless nodes. However, since Count-sketch is a streaming based algorithm, workers can calculate the sketch by further partitioning the $b\times n$ row-block into $b\times b$ square blocks and copying only one block at a time (see \cite{count_stream} for details).

\item
\textbf{Straggler-resilient sketching}: We note that the stragglers can also be ignored during distributed sketching (Algorithm \ref{algo:oversketching}). More specifically, the blocks ignored during the sketching phase can be marked as faults/stragglers during computation phase in Algorithm \ref{algo:OverSketch}.

\item 
\textbf{Limitations of ``Over-sampling''}: Schemes like leverage-score based sampling are also used in literature to compute approximate product of $\A$ and $\B$ \cite{mahoney, mahoney2}. Such schemes are as efficient as Count-sketch but are not suitable for straggler-resilient blocked multiplication. For example, if a worker with a block of $\tA$ straggles, where $\tA$ is obtained by ``over''-sampling the columns of $\A$ according to leverage scores, results from all other workers that are working on that column-block of $\tA$ is wasted as part of the column-block is unavailable. Thus, oversampling can require huge redundancy even for a small number of stragglers.

\end{itemize}

Next, we prove statistical guarantees on the accuracy of our sketching based matrix multiplication algorithm.

\begin{algorithm}[t]
\SetAlgoLined
\small
\SetKwInOut{Input}{Input}
\Input{Matrix $\A\in \R^{m\times n}$ and ``Over'' sketch dimension $z$}
\KwResult{$\tA = \A\times \s$ }
 \textbf{Initialization}: Divide $\A$ into row-blocks of size $b\times n$ \\
 \For{i=1 to m/b}{
  \For{j=1 to z/b}{
1. Worker $W_{ij}$ receives $i$-th row-block of $\A$, say $\A(i,:)$\\
2. $W_{ij}$ uses Algorithm \ref{algo:count_sketch} to compute the $(i,j)$-th  $b\times b$ block of $\tA$ using Count-sketch $\s_j$, that is, $\tA(i,j) = \A(i,:)\times \s_j$\\
3. $W_{ij}$ writes $\tA(i,j)$ back to the cloud storage
 }}
 \caption{``Over'' sketching $\A$ in parallel using the sketch $\s$ in \eqref{sketch_matrix} to compute $\tA = \A\s$}
 \label{algo:oversketching}
\end{algorithm}

\subsection{OverSketch: Approximation guarantees}

\begin{definition}
We say that an approximate matrix multiplication of two matrices $\A$ and $\B$ using sketch $\s$, given by $\A\s\s^T\B$,  is $(\epsilon,\theta)$ accurate if, with probability at least $(1-\theta)$, it satisfies
\begin{equation*}
||\A\B - \A\s\s^T\B||_F^2 \leq \epsilon||\A||_F^2||\B||_F^2.
\end{equation*}
\end{definition}
Now, for blocked matrix multiplication using OverSketch and as illustrated in Figure \ref{fig:bMM}, the following holds
\begin{theorem}\label{main_thm}
Computing $(\A\s)\times(\s^T\B)$ using sketch $\s\in \R^{n\times z}$ in \eqref{sketch_matrix} and $d = \frac{2}{\epsilon\theta}$, while ignoring $e$ stragglers among any $\frac{z}{b}$ workers, is $(\epsilon,\theta)$ accurate.
\end{theorem}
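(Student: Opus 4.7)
The plan is to reduce the analysis to the standard $(\epsilon,\theta)$ bound for a single Count-sketch and then exploit the fact that $\s$ is essentially an average of $N$ independent Count-sketches. Concretely, when we ignore any $e$ stragglers among the $N+e = z/b$ workers that contribute to a given block $\C(i,j)$, the value we actually form is $\tA(i,:)\,\tB(:,j) = \frac{1}{N}\sum_{k\in \mathcal{K}} \A(i,:)\,\s_k\s_k^T\,\B(:,j)$, where $\mathcal{K}$ is the set of $N$ non-straggling indices. In matrix form this equals $\A\s'\s'^T\B$ where $\s' = \frac{1}{\sqrt{N}}[\s_{k_1},\ldots,\s_{k_N}]$ is the submatrix of $\s$ obtained by keeping only the non-straggling Count-sketch blocks. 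Thus it suffices to prove that \emph{any} such $N$-block submatrix $\s'$ furnishes an $(\epsilon,\theta)$-accurate product.

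First I would invoke the two well-known moment properties of a single Count-sketch $\s_k\in\mathbb{R}^{n\times b}$ (see e.g.\ Clarkson--Woodruff or Meng--Mahoney): unbiasedness $\mathbb{E}[\A\s_k\s_k^T\B] = \A\B$, and the second-moment bound
\begin{equation*}
\mathbb{E}\bigl[\|\A\s_k\s_k^T\B - \A\B\|_F^2\bigr] \;\leq\; \frac{2}{b}\,\|\A\|_F^2\,\|\B\|_F^2.
\end{equation*}
Next I would use that the $N+e$ Count-sketches $\s_1,\ldots,\s_{N+e}$ are mutually independent, and the set $\mathcal{K}$ of survivors is determined by worker runtimes and hence independent of the sketch randomness. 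Writing $\A\s'\s'^T\B - \A\B = \frac{1}{N}\sum_{k\in \mathcal{K}}\bigl(\A\s_k\s_k^T\B - \A\B\bigr)$ as an average of $N$ independent zero-mean terms, the variance adds and I obtain
\begin{equation*}
\mathbb{E}\bigl[\|\A\s'\s'^T\B - \A\B\|_F^2\bigr] \;\leq\; \frac{1}{N^2}\cdot N \cdot \frac{2}{b}\|\A\|_F^2\|\B\|_F^2 \;=\; \frac{2}{Nb}\|\A\|_F^2\|\B\|_F^2 \;=\; \frac{2}{d}\|\A\|_F^2\|\B\|_F^2,
\end{equation*}
using $Nb = d$ in the last step.

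The proof concludes by Markov's inequality: with probability at least $1-\theta$,
\begin{equation*}
\|\A\s'\s'^T\B - \A\B\|_F^2 \;\leq\; \frac{2}{d\,\theta}\,\|\A\|_F^2\,\|\B\|_F^2,
\end{equation*}
and plugging in $d = 2/(\epsilon\theta)$ yields the claimed $(\epsilon,\theta)$ accuracy. Since the argument holds for every admissible $\mathcal{K}$ of size $N$, the specific identities of the stragglers do not matter.

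The main obstacle I expect is the clean invocation of the single-sketch second-moment bound with the constant $2/b$; if we want the paper to be self-contained I would sketch that derivation (it follows by expanding $\|\cdot\|_F^2$ entrywise and using $\mathbb{E}[(\s_k\s_k^T)_{ii}]=1$, $\mathbb{E}[(\s_k\s_k^T)_{ij}]=0$ for $i\neq j$, and $\mathrm{Var}((\s_k\s_k^T)_{ij})\leq 2/b$ for the Rademacher-$\pm 1$ hashing construction of Algorithm~\ref{algo:count_sketch}). A secondary subtlety is to make explicit that the straggling pattern $\mathcal{K}$ must be independent of the sketch randomness for the variance decomposition to be valid; this is natural in the serverless model where stragglers arise from system noise rather than from inspecting the random hashing.
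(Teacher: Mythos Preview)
Your approach is essentially the paper's: invoke the single Count-sketch moment bounds (unbiasedness and $\mathrm{Var}\leq 2/b$), use independence of the $\s_k$'s to average the variance down to $2/(Nb)=2/d$, and finish with Markov. The calculations you wrote are correct for a \emph{single} survivor set $\mathcal{K}$.

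There is one genuine gap, however. You write ``in matrix form this equals $\A\s'\s'^T\B$'' and then prove the $(\epsilon,\theta)$ bound for that global product. But the stragglers are chosen \emph{per block} $\C(i,j)$: the $z/b$ workers assigned to block $(i,j)$ are distinct from those assigned to block $(i',j')$, so the survivor set $\mathcal{K}_{ij}$ may differ from $\mathcal{K}_{i'j'}$. The matrix you actually compute is therefore not $\A\s'\s'^T\B$ for any single $\s'$; it is the patchwork whose $(i,j)$ block is $\A(i,:)\s_{ij}\s_{ij}^T\B(:,j)$ with $\s_{ij}$ built from $\mathcal{K}_{ij}$. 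Your final sentence (``since the argument holds for every admissible $\mathcal{K}$'') does not bridge this, because Markov on the global Frobenius error for one fixed $\s'$ says nothing about the mixed object.

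The fix is exactly what the paper does and costs you nothing new: apply your variance bound block-wise, obtaining
\[
\mathbb{E}\bigl\|\A(i,:)\s_{ij}\s_{ij}^T\B(:,j)-\A(i,:)\B(:,j)\bigr\|_F^2 \;\leq\; \tfrac{2}{d}\,\|\A(i,:)\|_F^2\|\B(:,j)\|_F^2
\]
for every $(i,j)$ (your computation already proves this for arbitrary $\A,\B$, hence for their sub-blocks), then sum over $i,j$ to get $\mathbb{E}\|\C-\A\B\|_F^2\leq \tfrac{2}{d}\|\A\|_F^2\|\B\|_F^2$ for the actual computed $\C$, and only then apply Markov. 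Your observation that the $\mathcal{K}_{ij}$'s are determined by runtimes and hence independent of the sketch randomness is exactly the hypothesis needed to make this block-wise expectation valid.
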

\begin{proof}
See Appendix \ref{appendix:proof}.

For certain applications, the guarantee in theorem \ref{main_thm} may be too crude as the product of $||\A||_F^2$ and $||\B||_F^2$ in the RHS can get big for large matrices $\A$ and $\B$. We can obtain a stronger result than in theorem \ref{main_thm} when $\min(rank(\A), rank(\B)) \ll n$, for example, when $\A$ is a fat matrix, or $\B$ is a tall matrix. Without loss of generality, say $\min(rank(\A), rank(\B))  = rank(\A) = r$. Thus, $||\A||_2 \leq ||\A||_F\leq \sqrt{r}||\A||_2$, where $||\cdot||_2$ denotes the spectral norm.
Hence, with probability at least $(1-\theta)$
\begin{equation*}
||\A\s\s^T\B - \A\B||_F^2 \leq \epsilon r||\A||_2^2||\B||_F^2.
\end{equation*}
Now, if we increase the sketch dimension by a factor of $r$ to $z = r(d+eb) = O(\frac{r}{\epsilon\theta})$, we get
\begin{equation}
||\A\s\s^T\B - \A\B||_F^2 \leq \epsilon ||\A||_2^2||\B||_F^2
\end{equation}
with probability $(1-\theta)$, which is a better approximation for the product $\A\s\s^T\B$.

During the reduction phase, we use $ml/b^2$ workers, which is much less than the number of workers used during the computation phase, that is, $mlz/b^3$. In our experiments, we observe that the possibility of stragglers reduces significantly if fewer workers are used. This is especially true for the reduction phase, as healthy running workers from the computation phase are reused, reducing the chances of stragglers. 
However, in the unfortunate event that stragglers are observed during reduction, speculative execution can be used, i.e. detecting and restarting the slow job. Another simple solution is to use existing coding techniques as described in Figure \ref{fig:kangwook2}, that is, by adding one parity row-block to $\tA$ and one parity row column to $\tB$ before multiplying them, which can tolerate $3$ stragglers in the worst case. However, this would require a decoding step to compensate for the missing stragglers.

\section{Experimental Results} 
\subsection{Blocked Matrix Multiplication on AWS Lambda}

 \begin{figure*}[t!]
    \centering
    \begin{subfigure}[t]{0.48\textwidth}
        \centering
        \includegraphics[scale=0.75]{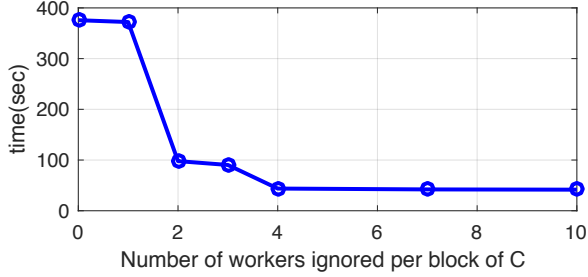}
        \caption{Time statistics for OverSketch on AWS Lambda for the straggler profile in Figure \ref{fig:stragglers}}
        \label{fig:lambda_matmul}
    \end{subfigure}
    ~
    \begin{subfigure}[t]{0.48\textwidth}
        \centering
        \includegraphics[scale=0.705]{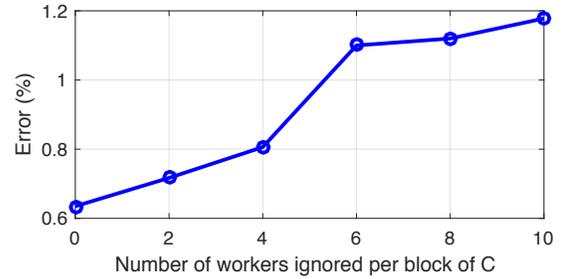}
        \caption{Frobenius norm error for sketched matrix product}
        \label{fig:sketch_error}
    \end{subfigure}
    \caption{\small Time and approximation error for OverSketch with 3000 workers when $e$, the number of workers ignored per block of $\C$, is varied from $0$ to $10$.}
\end{figure*}

\begin{figure*}[t]
    \centering
    \begin{subfigure}[t]{0.32\textwidth}
        \centering
        \includegraphics[scale=0.55]{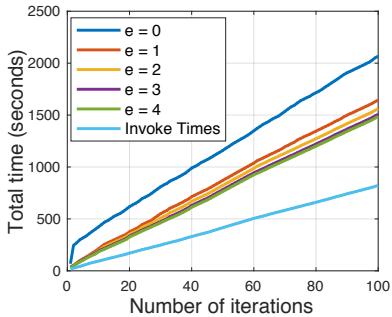}
        \caption{Plot of total time (i.e., invocation time plus computation time) versus number of iterations.}
        \label{fig:LP_with_invoke}
    \end{subfigure}
    ~
  \begin{subfigure}[t]{0.32\textwidth}
        \centering
        \includegraphics[scale=0.55]{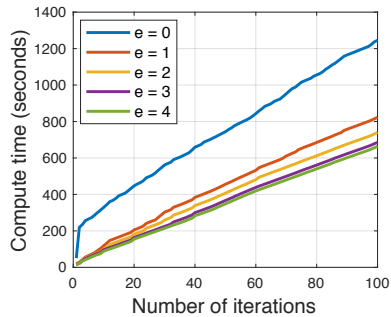}
        \caption{Plot of computation time versus number of iterations. When $e=1$, algorithm takes 7 minutes less compared to when $e=0$.}
        \label{fig:LP_without_invoke}
    \end{subfigure}%
    ~
    \begin{subfigure}[t]{0.32\textwidth}
        \centering
        \includegraphics[scale=0.55]{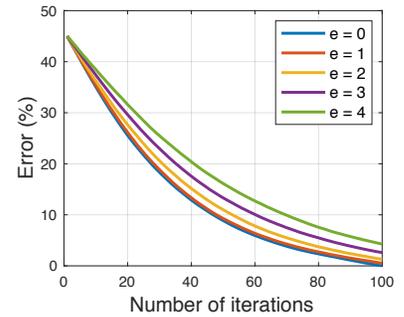}
        \caption{Plot of percentage error versus iterations. Ignoring one worker per block of $\C$ has negligible affect on the convergence.}
        \label{fig:LP_error}
    \end{subfigure}
    \caption{\small Time statistics and optimality gap on AWS Lambda while solving the LP in \eqref{lp} using interior point methods, where $e$ is the number of workers ignored per block of $\C$.}
    \label{fig:LPplots}
\end{figure*} 

We implement the straggler-resilient blocked matrix multiplication described above in the serverless computing platform \emph{Pywren} \cite{pywren, numpywren}\footnote{A working implementation of OverSketch is available at https://github.com/vvipgupta/OverSketch}, on the AWS Lambda cloud system to compute an approximate $\C = \A\s\times \s^T\B$ with $b = 2048, m=l =10b, n = 60b$ and $\s$ as defined in \eqref{sketch_matrix} with sketch dimension $z = 30b$. Throughout this experiment, we take $\A$ and $\B$ to be constant matrices where the entries of $\A$ are given by $\A(x,y) = x+y$ for all $x\in [1, m]$ and  $y\in [1,n]$ and $\B = \A^T$.  Thus, to compute $(i,j)$-th $b\times b$ block of $\C$, 30 nodes compute the product of $\tA(i,:)$ and $\tB(:,j)$, where $\tA=\A\s$ and $\tB=\s^T\B$. While collecting results, we ignore $e$ workers for each block of $\C$, where $e$ is varied from $0$ to $10$. 

The time statistics are plotted in Figure \ref{fig:lambda_matmul}. The corresponding worker job times are shown in Figure \ref{fig:stragglers}, where the median job time is around $42$ seconds, and some stragglers return their results around $100$ seconds and some others take up to $375$ seconds. We note that the compute time for matrix multiplication reduces by a factor of $9$ if we ignore at most $4$ workers per $30$ workers that compute a block of $\C$. In figure \ref{fig:sketch_error}, for same $\A$ and $\B$, we plot average error in matrix multiplication by generating ten instances of sketches and averaging the error in Frobenius norm, $\frac{||\A\B - \A\s\s^T\B||_F}{||\A\B||_F}$, across instances. We see that the average error is only $0.8\%$ when 4 workers are ignored. 


\subsection{Solving Optimization Problems with Sketched Matrix multiplication}

Matrix multiplication is the bottleneck of many optimization problems. Thus, sketching has been applied to solve several fairly common optimization problems using second-order methods, like linear programs, maximum likelihood estimation, generalized linear models like least squares and logistic regression, semi-definite programs, support vector machines, Kernel ridge regression, etc., with essentially same convergence guarantees as exact matrix multiplication \cite{mert, yun_mert}. As an instance, we solve the following linear program (LP) using interior point methods on AWS Lambda
\begin{align}\label{lp}
\underset{\x}{\text{minimize}}~ &\cc^T\x\\
\text{subject to}~ &\A\x\leq \bb,\nn
\end{align}
where $\x \in \R^{m\times 1}, \cc \in \R^{m\times 1}, \bb \in \R^{n\times 1}$ and $\A\in \R^{n\times m}$ is the constraint matrix with $n>m$. To solve \eqref{lp} using the logarithmic barrier method, we solve the following sequence of problems using Newton's method
\begin{equation}\label{int_point}
\min_{\x\in \R^m}  f(\x) = \min_{\x\in \R^m} \left(\tau \cc^T\x - \sum_{i=1}^n\log(b_i - \ba_i\x)\right),
\end{equation}
where $\ba_i$ is the $i$-th row of $\A$, $\tau$ is increased geometrically as $\tau = 2\tau$ after every 10 iterations and the total number of iterations is $100$. The update in the $t$-th iteration is given by 
\begin{equation}
\x_{t+1} = \x_t - \eta(\nabla^2f(\x_t))^{-1}\nabla f(\x_t),
\end{equation}
where $\x_t$ is the estimate of the solution in the $t$-th iteration and $\eta$ is the appropriate step-size. The gradient and Hessian for the objective in \eqref{int_point} are given by
\begin{eqnarray}
\nabla f(\x) = \tau \cc + \sum_{i=1}^n \frac{\ba_i^T}{b_i - \ba_i^T\x} \text{ and} \\
\nabla^2f(\x) = \A^T\text{diag}\frac{1}{(bi - \ba_i\x)^2}\A,
\end{eqnarray}  
respectively. The square root of the Hessian is given by $\nabla^2f(\x)^{1/2} = \text{diag}\frac{1}{|bi - \ba_i\x|}\A$. The  computation of Hessian requires $O(nm^2)$ time and is the bottleneck in each iteration. Thus, we use our distributed and sketching-based blocked matrix multiplication scheme to mitigate stragglers while evaluating the Hessian approximately, i.e. $\nabla^2f(\x) \approx (\s\nabla^2f(\x)^{1/2})^T\times (\s\nabla^2f(\x)^{1/2})$, on AWS Lambda, where $\s$ is defined in \eqref{sketch_matrix}. 

We take the block size, $b$, to be $1000$, the dimensions of $\A$ to be $n=40b$ and $m = 5b$ and the sketch dimension to be $z=20b$. We use a total of $500$ workers in each iteration. Thus, to compute each $b \times b$ block of $\C$, $20$ workers are assigned to compute matrix multiplication on two $b\times b$ blocks. We depict the time and error versus iterations in figure \ref{fig:LPplots}. We plot our results for different values of $e$, where $e$ is the number of workers ignored per block of $\C$. In our simulations, each iteration includes around $9$ seconds of invocation time to launch AWS Lambda workers and assign tasks. In figure \ref{fig:LP_with_invoke}, we plot the total time that includes the invocation time and computation time versus iterations. In \ref{fig:LP_without_invoke}, we exclude the invocation time and plot just the compute time in each iteration and observe $34\%$ savings in solving \eqref{lp} when $e=1$, whereas the effect on the error with respect to the optimal solution is insignificant (as shown in figure \ref{fig:LP_error}).  

\subsection{Comparison with Existing Straggler Mitigation Schemes}

\begin{figure}[t!]
    \centering
    \begin{subfigure}[t]{0.23\textwidth}
        \centering
        \includegraphics[scale=0.42]{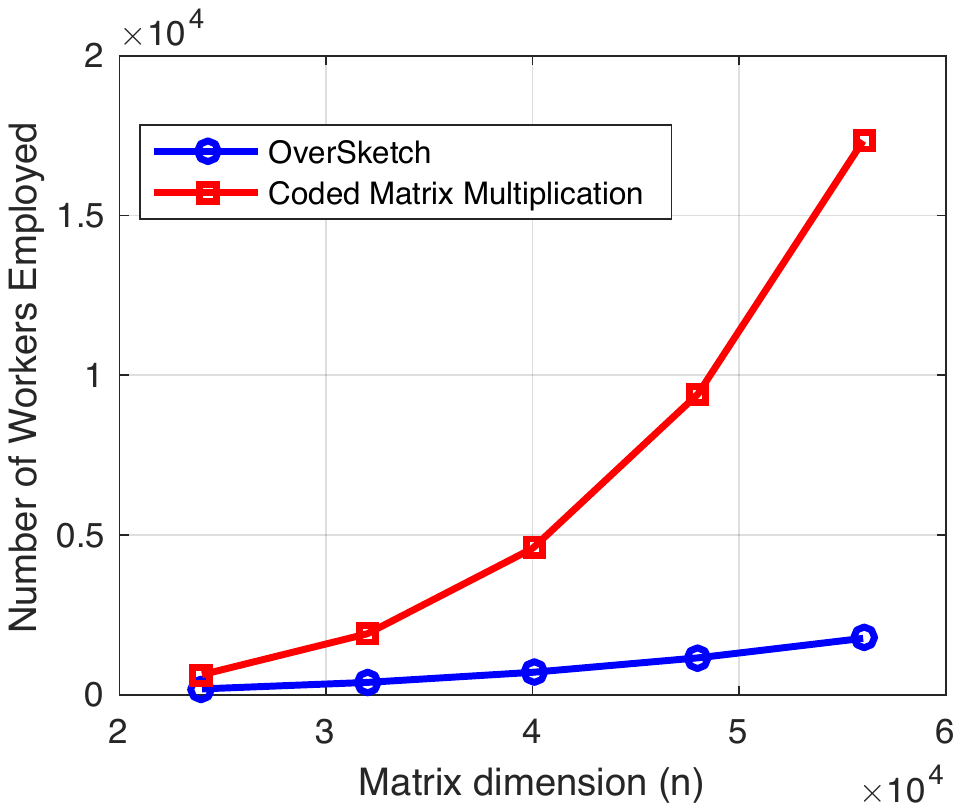}
        \caption{Number of workers required for OverSketch and \cite{kangwook2}}
        \label{fig:worker_comparison}
    \end{subfigure} 
    ~
    \begin{subfigure}[t]{0.23\textwidth}
        \centering
        \includegraphics[scale=0.42]{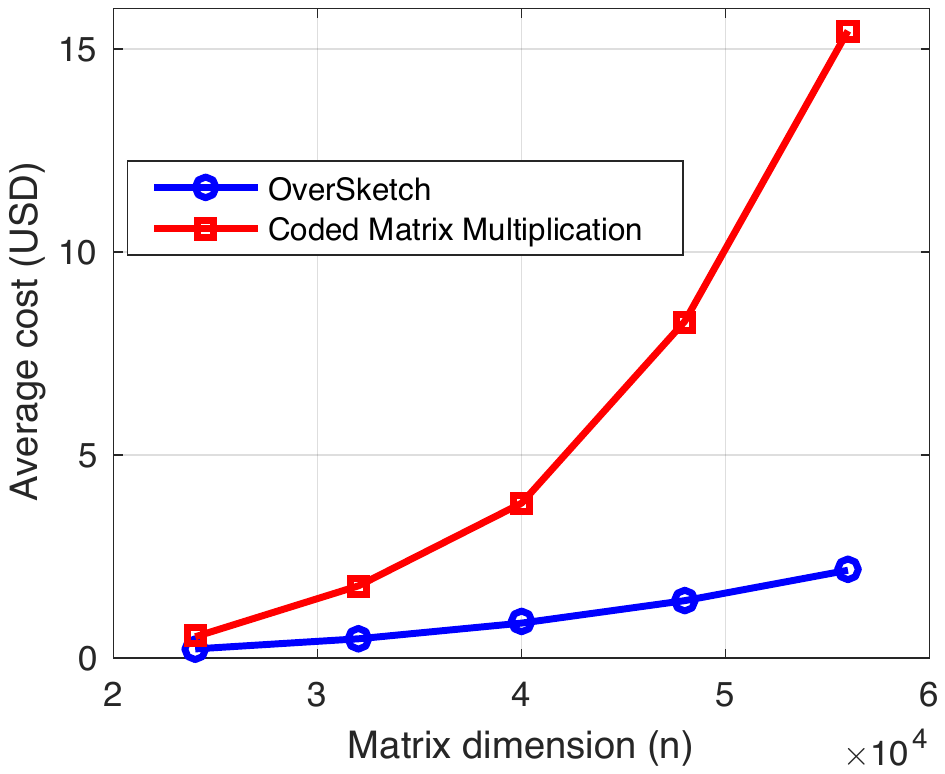}
        \caption{\footnotesize Cost for distributed matrix multiplication for OverSketch and \cite{kangwook2}}
        \label{fig:cost_comparison}
    \end{subfigure}
    \caption{\small Comparison of OverSketch with coded theory based scheme in \cite{kangwook2} on AWS Lambda. OverSketch requires asymptotically less workers which translates to significant savings in cost.}
    \vspace{-4mm}
\end{figure} 

In this section, we compare OverSketch with an existing coding-theory based straggler mitigation scheme described in \cite{kangwook2}. An illustration for \cite{kangwook2} is shown in Figure \ref{fig:kangwook2}. We multiply two square matrices $\A$ and $\B$ of dimension $n$ on AWS Lambda using the two schemes, where $\A(x,y) = x+y$ and $\B(x,y) = x\times y$ for all $x,y \in [1,n]$.  We limit the bandwidth of each worker by 400 MB (i.e. around 48 million entries, where each entry takes 8 bytes) for a fair comparison. Thus, we have $3b^2 = 48\times 10^6$, or $b=4000$ for OverSketch and $2an + a^2 = 48\times 10^6$ for \cite{kangwook2}, where $a$ is the size of the row-block of $\A$ (and column-block of $\B$). We vary the matrix dimension $n$ from $6b = 24000$ to $14b = 56000$. For OverSketch, we take the sketch dimension $z$ to be $n/2 + b$, and take $e=1$, i.e., ignore one straggler per block of $\C$. For straggler mitigation in \cite{kangwook2}, we add one parity row in $\A$ and one parity column in $\B$. In Figures \ref{fig:worker_comparison} and \ref{fig:cost_comparison}, we compare the workers required and average cost in dollars, respectively, for the two schemes. We note that OverSketch requires asymptotically fewer workers, and it translates to the cost for doing matrix multiplication. This is because the running time at each worker is heavily dependent on communication, which is the same for both the schemes. For $n=20000$, the average error in Frobenius norm for OverSketch is less than $2\%$, and decreases as $n$ is increased. 

The scheme in \cite{kangwook2} requires an additional decoding phase, and assume the existence of a powerful master that can store the entire product $\C$ in memory and decode for the missing blocks using the redundant chunks. This is also true for the other schemes in \cite{poly_codes,tavor,matdot}. Moreover, these schemes would fail when the number of stragglers is more than the provisioned redundancy while OverSketch has a 'graceful degradation' as one can get away by ignoring more workers than provisioned at the cost of accuracy of the result.  

\section{Conclusion}
Serverless computing penetrates a large user base by allowing users to run distributed applications without the hassles of server management. 
We analyzed the cost of distributed computation in serverless computing for naive and blocked matrix multiplication. Through analysis and experiments on AWS Lambda, we show that the latter significantly outperforms the former. Thus, existing straggler mitigation schemes that do naive matrix multiplication are unsuitable. To this end, we develop OverSketch, a sketching based algorithm for approximate blocked matrix multiplication. Our sketching scheme requires time linear in the size of input matrices. 
As a distributed matrix multiplication algorithm, OverSketch has many advantages: reduction in dimension of input matrices for computational savings, and built-in straggler resiliency. 
Extensive experiments on AWS Lambda support our claims that OverSketch is resilient to stragglers, cost-efficient, and highly accurate for suitably chosen sketch dimension.

\appendix

\subsection{Proof of Theorem \ref{thm:cost}}\label{appendix:cost_prop}
To compare naive and blocked multiplication, we first observe that the computation cost in \eqref{Ctotal}, that is $W\times F$, is the same for both naive and blocked multiplication and is equal to $O(mnl)$, which is the total amount of work done during matrix-matrix multiplication\footnote{The computation cost for blocked matrix multiplication can be further improved by using Strassen type methods that take $O(b^{2.38})$ to multiply two square sub-blocks of dimension $b\times b$, but we do not consider that advantage in this paper for clarity of exposition and to emphasize on savings just due to communication.}. 
Let $W_1$ be the number of workers required for naive matrix multiplication. Then, $W_1 = ml/a^2,$ as each worker is sent one row-block of $\A$ from $m/a$ choices, and one column-block of $\B$ from $l/a$ choices. Each worker receives $2an$ entries and writes back $a^2$ entries. Hence, the total communication incurred during the algorithm is $W_1\times (2an + a^2) = (2nml/a + ml)$. Also, since each worker can only receive $M$ entries, we have $M = 2an$, thus $a = M/2n$. Hence, the total bandwidth cost for naive multiplication is $\beta \times (4n^2ml/M + ml)= O(n^{2-\delta}ml)$. Also, the total number of messages sent during the process is $W_1$, and hence the total latency cost is $O(mln^{2(1-\delta)})$.

During the computation phase for blocked multiplication, $W_{2,comp} =  (n/b)\times ml/b^2$, as computation of one $b\times b$ block of $\C\in \R^{m\times l}$ requires $n/b$ workers, and there are a total of $ml/b^2$ such blocks. Again, each worker receives two $b\times b$ blocks, one from each $\A$ and $\B$, and writes back a $b\times b$ block, where $b$ satisfies $M =2b^2$. Thus, the total bandwidth cost incurred during the computation phase is $\beta W_{2,comp}\times 3b^2 = 3\beta nml/b = O(mln^{1-\delta/2})$. The total number of messages received by the workers is $W_{2,comp}$, and, hence, the latency cost is $\alpha nml/b^3 = O(mln^{1-3\delta/2})$. During the reduction phase, the number of workers required is $W_{2,red} = ml/b^2$, and each worker receives $n/b$ blocks of size $b \times b$ to compute one block of $\C$. Thus, for the reduction phase, the communication is $W_{2,red} \times b^2 \times (n/b) = nml/b = O(mln^{1-\delta/2})$ and total messages sent is $W_{2,red}\times (n/b) = mln/b^3 = O(mln^{1-3\delta/2})$. Hence, the total latency and bandwidth costs for blocked multiplication are $O(mln^{1-3\delta/2})$ and $O(mln^{1-\delta/2})$, respectively. This analysis justifies the costs summarized in Table \ref{table1} and proves the theorem.

\subsection{Proof of Theorem \ref{main_thm}}\label{appendix:proof}

The following three lemmas will assist us with the proof of Theorem \ref{main_thm}. 
\begin{lemma}\label{countEV}
let $\s_c \in \R^{n\times b}$ be a Count sketch matrix. Then, for any vectors $\textbf{x},\textbf{y} \in \R^{n\times 1}$, the following holds
\begin{align}
&\e[\textbf{x}^T\s_c\s_c^T\textbf{y}] = \textbf{x}^T\textbf{y} \label{count_sketch_expectation}\\
&\textrm{Var}[\textbf{x}^T\s_c\s_c^T\textbf{y}] = \frac{1}{b}\left(\sum_{j\neq l} x_j^2y_l^2 + \sum_{j\neq l}x_jy_jx_ly_l\right)\nn \\ 
&\leq \frac{1}{b}\left((\textbf{x}^T\textbf{y})^2 + ||\textbf{x} ||_2^2||\textbf{y} ||_2^2\right) \leq \frac{2}{b}||\textbf{x} ||_2^2||\textbf{y} ||_2^2.\label{count_sketch_variance}
\end{align}
\end{lemma}
\begin{proof}
See \cite{varS},  Appendix A. 
\end{proof}

\begin{lemma}\label{lemma:sketch_properties}
Let $\s = \frac{1}{\sqrt{N}}(\s_1, \s_2,\cdots,\s_{N}) \in \R^{n\times d},$ where $d = Nb$ and $\s_i \in \R^{n\times b}$ is a Count-sketch matrix that satisfies \eqref{count_sketch_expectation} and \eqref{count_sketch_variance}, for all $i\in 1,2,\cdots,N$. Then, for any vectors $\textbf{x},\textbf{y} \in \R^{n\times 1}$, the following holds
\begin{eqnarray*}
\e[\textbf{x}^T\s\s^T\textbf{y}] &=& \textbf{x}^T\textbf{y}\\
\textrm{Var}[\textbf{x}^T\s\s^T\textbf{y}]  &\leq& \frac{2}{d}||\textbf{x} ||_2^2||\textbf{y} ||_2^2.
\end{eqnarray*}
\end{lemma}
\begin{proof}
Note that, $\s\s^T = \frac{1}{N}(\s_1\s_1^T + \s_2\s_2^T + \cdots + \s_N\s_N^T)$. Thus, 
$$\x^T\s\s^T\y = \frac{1}{N}(\x^T\s_1\s_1^T\y + \x^T\s_2\s_2^T\y + \cdots + \x^T\s_N\s_N^T\y),$$
and hence, $\e[\textbf{x}^T\s\s^T\textbf{y}] = \textbf{x}^T\textbf{y}$ by \eqref{count_sketch_expectation} and linearity of expectation. Now,
\begin{align}
&~~~~\textrm{Var}[\x^T\s\s^T\y] = \e[(\x^T\s\s^T\y - \x^T\y)^2]\nn\\
&=\e\scalebox{0.87}{$\left[\frac{1}{N}\left((\x^T\s_1\s_1^T\y + \x^T\s_2\s_2^T\y + \cdots + \x^T\s_N\s_N^T\y) - N\x^T\y\right)^2\right]$}\nn\\
&=\e\biggl[\frac{1}{N^2}\left(\sum_{i=1}^N(\x^T\s_i\s_i^T\y - \x^T\y)\right)^2\biggr]\nn\\
&=\frac{1}{N^2} \biggl(\sum_{i=1}^N\e[(\x^T\s_i\s_i^T\y - \x^T\y)^2] \nn\\ 
&~~~~~~~~+ \sum_{i\neq j}\e[(\x^T\s_i\s_i^T\y - \x^T\y)(\x^T\s_j\s_j^T\y - \x^T\y)]\biggr)\nn\\
&=\frac{1}{N^2}\biggl(\sum_{i=1}^N \textrm{Var}[\x^T\s_i\s_i^T\y] + \nn\\ 
&~~~~\sum_{i\neq j}\e[(\x^T\s_i\s_i^T\y - \x^T\y)(\x^T\s_j\s_j^T\y - \x^T\y)]\biggr). \label{var_sketch}
\end{align}
Noting that $\s_1,\s_2, \cdots, \s_N$ are independent random variables and using \eqref{count_sketch_expectation}, we get  
\begin{align*}
&\e[(\x^T\s_i\s_i^T\y - \x^T\y)(\x^T\s_j\s_j^T\y - \x^T\y)] \\ 
&= \e[\x^T\s_i\s_i^T\y - \x^T\y]\e[\x^T\s_j\s_j^T\y - \x^T\y] = 0 ~\forall~ i\neq j.
\end{align*}
Now, using the above equation and \eqref{count_sketch_variance} in \eqref{var_sketch}, we get 
$$\textrm{Var}[\x^T\s\s^T\y] = \frac{1}{N^2}\times N\times \frac{2}{b}||\textbf{x} ||_2^2||\textbf{y} ||_2^2 = \frac{2}{d}||\textbf{x} ||_2^2||\textbf{y} ||_2^2,$$
which proves the lemma.
\end{proof}

\begin{lemma}\label{lemma2}
Let $d = 2/\epsilon$. Then, for any $\A\in \R^{m\times n}$, $\B \in \R^{n\times l}$ and $\s$ as defined in lemma \ref{lemma:sketch_properties},
\begin{equation}
\e||\A\B - \A\s\s^T\B||_F^2 \leq \epsilon||\A||_F^2||\B||_F^2.
\end{equation}
\end{lemma}
\begin{proof}
By the property of Frobenius norm and linearity of expectation, we have 
{\small
\begin{eqnarray}\label{eq1}
\e||\A\B - \A\s\s^T\B||_F^2 = \sum_{i=1}^m\sum_{j=1}^l\e|\textbf{a}^{(i)}\textbf{b}_{(j)} - \textbf{a}^{(i)}\s\s^T\textbf{b}_{(j)}|^2,
\end{eqnarray}}
where $\textbf{a}^{(i)}$ and $\textbf{b}_{(j)}$ are the $i$-th row and $j$-th columns of $\A$ and $\B$, respectively. Now, using lemma \ref{lemma:sketch_properties} in \eqref{eq1}, we get
\begin{align*}
\e||\A\B &- \A\s\s^T\B||_F^2 = \sum_{i=1}^m\sum_{j=1}^k\textrm{Var}[\textbf{a}^{(i)}\s\s^T\textbf{b}_{(j)}]^2\\
 &\leq \sum_{i=1}^m\sum_{j=1}^k\frac{2}{d}||\textbf{a}^{(i)} ||_2^2||\textbf{b}_{(j)} ||_2^2\\
 &=  \epsilon\left(\sum_{i=1}^m||\textbf{a}^{(i)} ||_2^2\right)\biggl( \sum_{j=1}^k||\textbf{b}_{(j)} ||_2^2\biggr)  ~~~\text{ (as }d=2/\epsilon)\\
 &= \epsilon||\A||_F^2||\B||_F^2,
\end{align*}
which is the desired result.
\end{proof}

We are now ready to prove theorem \ref{main_thm}.
As illustrated in figure \ref{fig:bMM}, we can think of computation of a $b\times b$ sub-block $\C(i,j)$ as multiplication of row block $\tA(i,:)$ of $\tA = \A\s$ and column-block $\tB(:,j)$ of $\tB = \s^T\B$. Since we ignore upto only $e$ workers in the calculation of a $b\times b$ block of $\C$, the effective sketch dimension is greater than $d= \frac{2}{\epsilon\theta}$, and therefore, from lemma \ref{lemma2}
\begin{align}\label{eq2}
\e||\A(i,:)\B(:,j) &- \A(i,:)\s_{ij}\s_{ij}^T\B(:,j)||_F^2 \nn \\
&\leq \epsilon\theta||\A(i,:)||_F^2||\B(:,j)||_F^2, 
\end{align}
for all $ i\in 1,\cdots,m/b$ and $ j \in 1,\cdots, l/b$.
Note that even if we applied the same sketch on $\A$ and $\B$ across row and column blocks, respectively, $\s_{ij}$ in the above equation might end up being different for each pair $(i,j)$ depending upon the location of stragglers, though with a common property that the sketch dimension is at least $d$.  
Now, we note that
\begin{align*}
\e||&\A\s\s^T\B - \A\B||_F^2 \\
&= \sum_{i=1}^{m/b}\sum_{j=1}^{l/b} \e||\A(i,:)\B(:,j) - \A(i,:)\s_{ij}\s_{ij}^T\B(:,j)||_F^2\\
&\leq \epsilon\theta\sum_{i=1}^{m/b}\sum_{j=1}^{l/b}||\A(i,:)||_F^2||\B(:,j)||_F^2 = \epsilon\theta||\A||_F^2||\B||_F^2.
\end{align*}
Now, by Markov's inequality
\begin{align*}
\mathbb{P}(||\A\s\s^T\B &- \A\B||_F^2 > \epsilon||\A||_F^2||\B||_F^2) \\
&\leq \frac{\e||\A\s\s^T\B - \A\B||_F^2}{{\epsilon||\A||_F^2||\B||_F^2}} \\
&\leq \frac{\epsilon\theta||\A||_F^2||\B||_F^2}{\epsilon||\A||_F^2||\B||_F^2} = \theta,
\end{align*}
which proves the desired result.
\end{proof}

%
%



%
\footnotesize
\bibliographystyle{IEEEtran}
\bibliography{bibli}

\begin{thebibliography}{10}
\providecommand{\url}[1]{#1}
\csname url@samestyle\endcsname
\providecommand{\newblock}{\relax}
\providecommand{\bibinfo}[2]{#2}
\providecommand{\BIBentrySTDinterwordspacing}{\spaceskip=0pt\relax}
\providecommand{\BIBentryALTinterwordstretchfactor}{4}
\providecommand{\BIBentryALTinterwordspacing}{\spaceskip=\fontdimen2\font plus
\BIBentryALTinterwordstretchfactor\fontdimen3\font minus
  \fontdimen4\font\relax}
\providecommand{\BIBforeignlanguage}[2]{{%
\expandafter\ifx\csname l@#1\endcsname\relax
\typeout{** WARNING: IEEEtran.bst: No hyphenation pattern has been}%
\typeout{** loaded for the language `#1'. Using the pattern for}%
\typeout{** the default language instead.}%
\else
\language=\csname l@#1\endcsname
\fi
#2}}
\providecommand{\BIBdecl}{\relax}
\BIBdecl

\bibitem{book_future_supercomputing}
S.~L. Graham, M.~Snir, and C.~A. Patterson, \emph{Getting up to speed: The
  future of supercomputing}.\hskip 1em plus 0.5em minus 0.4em\relax National
  Academies Press, 2005.

\bibitem{comm_cost2}
L.~I. Millett and S.~H. Fuller, ``Computing performance: Game over or next
  level?'' \emph{Computer}, vol.~44, pp. 31--38, 01 2011.

\bibitem{demmel}
G.~Ballard, E.~Carson, J.~Demmel, M.~Hoemmen, N.~Knight, and O.~Schwartz,
  ``Communication lower bounds and optimal algorithms for numerical linear
  algebra,'' \emph{Acta Numerica}, vol.~23, pp. 1--155, 2014.

\bibitem{2.5d}
E.~Solomonik and J.~Demmel, ``Communication-optimal parallel {2.5D} matrix
  multiplication and {LU} factorization algorithms,'' in \emph{Proceedings of
  the 17th International Conference on Parallel Processing}, 2011, pp. 90--109.

\bibitem{serverless_computing}
I.~Baldini, P.~Castro, K.~Chang, P.~Cheng, S.~Fink, V.~Ishakian, N.~Mitchell,
  V.~Muthusamy, R.~Rabbah, A.~Slominski, and P.~Suter, \emph{Serverless
  Computing: Current Trends and Open Problems}.\hskip 1em plus 0.5em minus
  0.4em\relax Springer Singapore, 2017.

\bibitem{pywren}
E.~Jonas, Q.~Pu, S.~Venkataraman, I.~Stoica, and B.~Recht, ``Occupy the cloud:
  distributed computing for the 99\%,'' in \emph{Proceedings of the 2017
  Symposium on Cloud Computing}.\hskip 1em plus 0.5em minus 0.4em\relax ACM,
  2017, pp. 445--451.

\bibitem{numpywren}
\BIBentryALTinterwordspacing
V.~Shankar, K.~Krauth, Q.~Pu, E.~Jonas, S.~Venkataraman, I.~Stoica, B.~Recht,
  and J.~Ragan{-}Kelley, ``numpywren: serverless linear algebra,'' \emph{CoRR},
  vol. abs/1810.09679, 2018. [Online]. Available:
  \url{http://arxiv.org/abs/1810.09679}
\BIBentrySTDinterwordspacing

\bibitem{tailatscale}
J.~Dean and L.~A. Barroso, ``The tail at scale,'' \emph{Commun. ACM}, vol.~56,
  no.~2, pp. 74--80, Feb. 2013.

\bibitem{hoefler}
T.~Hoefler, T.~Schneider, and A.~Lumsdaine, ``Characterizing the influence of
  system noise on large-scale applications by simulation,'' in \emph{Proc. of
  the ACM/IEEE Int. Conf. for High Perf. Comp., Networking, Storage and
  Analysis}, 2010, pp. 1--11.

\bibitem{mapreduce}
J.~Dean and S.~Ghemawat, ``Mapreduce: Simplified data processing on large
  clusters,'' \emph{Commun. ACM}, vol.~51, no.~1, pp. 107--113, Jan. 2008.

\bibitem{spark}
M.~Zaharia, M.~Chowdhury, M.~J. Franklin, S.~Shenker, and I.~Stoica, ``Spark:
  Cluster computing with working sets,'' in \emph{Proceedings of the 2Nd USENIX
  Conference on Hot Topics in Cloud Computing}, 2010, pp. 10--10.

\bibitem{kangwook1}
K.~Lee, M.~Lam, R.~Pedarsani, D.~Papailiopoulos, and K.~Ramchandran, ``Speeding
  up distributed machine learning using codes,'' \emph{IEEE Transactions on
  Information Theory}, vol.~64, no.~3, pp. 1514--1529, 2018.

\bibitem{kangwook2}
K.~Lee, C.~Suh, and K.~Ramchandran, ``High-dimensional coded matrix
  multiplication,'' in \emph{IEEE Int. Sym. on Information Theory (ISIT),
  2017}.\hskip 1em plus 0.5em minus 0.4em\relax IEEE, 2017, pp. 2418--2422.

\bibitem{tavor}
T.~Baharav, K.~Lee, O.~Ocal, and K.~Ramchandran, ``Straggler-proofing
  massive-scale distributed matrix multiplication with d-dimensional product
  codes,'' in \emph{IEEE Int. Sym. on Information Theory (ISIT), 2018}.\hskip
  1em plus 0.5em minus 0.4em\relax IEEE, 2018.

\bibitem{poly_codes}
Q.~Yu, M.~Maddah-Ali, and S.~Avestimehr, ``Polynomial codes: an optimal design
  for high-dimensional coded matrix multiplication,'' in \emph{Adv. in Neural
  Inf. Processing Systems}, 2017, pp. 4403--4413.

\bibitem{matdot}
S.~Dutta, M.~Fahim, F.~Haddadpour, H.~Jeong, V.~Cadambe, and P.~Grover, ``On
  the optimal recovery threshold of coded matrix multiplication,'' \emph{arXiv
  preprint arXiv:1801.10292}, 2018.

\bibitem{jingge}
J.~Zhu, Y.~Pu, V.~Gupta, C.~Tomlin, and K.~Ramchandran, ``A sequential
  approximation framework for coded distributed optimization,'' in \emph{Annual
  Allerton Conf. on Communication, Control, and Computing, 2017}.\hskip 1em
  plus 0.5em minus 0.4em\relax IEEE, 2017, pp. 1240--1247.

\bibitem{abft}
M.~Vijay and R.~Mittal, ``Algorithm-based fault tolerance: a review,''
  \emph{Microprocessors and Microsystems}, vol.~21, no.~3, pp. 151 -- 161,
  1997, fault Tolerant Computing.

\bibitem{summa}
R.~A. van~de Geijn and J.~Watts, ``Summa: Scalable universal matrix
  multiplication algorithm,'' Tech. Rep., 1995.

\bibitem{demmel2}
J.~Demmel, ``Communication-avoiding algorithms for linear algebra and beyond,''
  in \emph{2013 IEEE 27th Int. Sym. on Parallel and Distributed Processing},
  May 2013, pp. 585--585.

\bibitem{devarakonda}
A.~Devarakonda, K.~Fountoulakis, J.~Demmel, and M.~W. Mahoney, ``Avoiding
  communication in primal and dual block coordinate descent methods,''
  \emph{arXiv:1612.04003}, 2016.

\bibitem{gp_boyd}
S.~P. Boyd, S.~J. Kim, L.~Vandenberghe, and A.~Hassibi, ``A tutorial on
  geometric programming,'' 2004.

\bibitem{mahoney}
P.~Drineas, R.~Kannan, and M.~W. Mahoney, ``Fast monte carlo algorithms for
  matrices {I}: Approximating matrix multiplication,'' \emph{SIAM Journal on
  Computing}, vol.~36, no.~1, pp. 132--157, 2006.

\bibitem{woodruff_now}
D.~P. Woodruff, ``Sketching as a tool for numerical linear algebra,''
  \emph{Found. Trends Theor. Comput. Sci.}, vol.~10, pp. 1--157, 2014.

\bibitem{mert}
M.~Pilanci and M.~J. Wainwright, ``Newton sketch: A near linear-time
  optimization algorithm with linear-quadratic convergence,'' \emph{SIAM Jour.
  on Opt.}, vol.~27, pp. 205--245, 2017.

\bibitem{yun_mert}
Y.~Yang, M.~Pilanci, and M.~J. Wainwright, ``Randomized sketches for kernels:
  Fast and optimal non-parametric regression,'' \emph{stat}, vol. 1050, pp.
  1--25, 2015.

\bibitem{clarkson_woodruff_stoc}
K.~L. Clarkson and D.~P. Woodruff, ``Low rank approximation and regression in
  input sparsity time,'' in \emph{Proc. of the Annual ACM Sym. on Theory of
  Computing}.\hskip 1em plus 0.5em minus 0.4em\relax ACM, 2013, pp. 81--90.

\bibitem{shusen}
S.~Wang, ``A practical guide to randomized matrix computations with matlab
  implementations,'' \emph{arXiv:1505.07570}, 2015.

\bibitem{meng_mahoney}
X.~Meng and M.~W. Mahoney, ``Low-distortion subspace embeddings in
  input-sparsity time and applications to robust linear regression,'' in
  \emph{Proc. of the Forty-fifth Annual ACM Sym. on Theory of Computing}.\hskip
  1em plus 0.5em minus 0.4em\relax ACM, 2013, pp. 91--100.

\bibitem{pagh}
N.~Pham and R.~Pagh, ``Fast and scalable polynomial kernels via explicit
  feature maps,'' in \emph{Proc. of the 19th ACM SIGKDD International
  Conference on Knowledge Discovery and Data Mining}.\hskip 1em plus 0.5em
  minus 0.4em\relax ACM, 2013, pp. 239--247.

\bibitem{count_stream}
\BIBentryALTinterwordspacing
M.~Charikar, K.~Chen, and M.~Farach-Colton, ``Finding frequent items in data
  streams,'' in \emph{Proceedings of the 29th International Colloquium on
  Automata, Languages and Programming}, ser. ICALP '02.\hskip 1em plus 0.5em
  minus 0.4em\relax Berlin, Heidelberg: Springer-Verlag, 2002, pp. 693--703.
  [Online]. Available: \url{http://dl.acm.org/citation.cfm?id=646255.684566}
\BIBentrySTDinterwordspacing

\bibitem{mahoney2}
\BIBentryALTinterwordspacing
P.~Drineas, M.~Magdon-Ismail, M.~W. Mahoney, and D.~P. Woodruff, ``Fast
  approximation of matrix coherence and statistical leverage,'' \emph{J. Mach.
  Learn. Res.}, vol.~13, no.~1, pp. 3475--3506, Dec. 2012. [Online]. Available:
  \url{http://dl.acm.org/citation.cfm?id=2503308.2503352}
\BIBentrySTDinterwordspacing

\bibitem{varS}
K.~Weinberger, A.~Dasgupta, J.~Langford, A.~Smola, and J.~Attenberg, ``Feature
  hashing for large scale multitask learning,'' in \emph{Proc. of the 26th
  Annual International Conference on Machine Learning}.\hskip 1em plus 0.5em
  minus 0.4em\relax ACM, 2009, pp. 1113--1120.

\end{thebibliography}

\end{document}